\documentclass[10pt, conference, letterpaper]{IEEEtran}
\usepackage{array}

\usepackage{mdwmath}
\usepackage{mdwtab}

\usepackage[dvips]{graphicx}

\usepackage{float}
\usepackage{stfloats}
\usepackage{lipsum}
\usepackage{tabularx}
\usepackage[tight,footnotesize]{subfigure}
\usepackage[small]{caption}
\usepackage{epstopdf}

\usepackage{amsmath}
\usepackage{amssymb}
\usepackage{paralist}
\usepackage{mathrsfs}
\usepackage{url}
\usepackage{mathbbol}
\usepackage{times}
\usepackage{booktabs}
\usepackage{verbatim}

\usepackage{bm}			

\usepackage{amsthm}
\newtheorem{theorem}{Theorem}

\newtheorem{definition}{Definition}

\newtheorem{lemma}{Lemma}

\newtheorem{corollary}{Corollary}
\newtheorem{assumption}{Assumption}
\newtheorem{result}{Result}

\usepackage[table]{xcolor}
\usepackage{color,colortbl}

\definecolor{Gray}{gray}{0.9}

\newcommand{\eq}[1]{Eq.~\eqref{#1}}

\newcommand{\bgp}{\texttt{bgp-degree }}
\newcommand{\Dix}{D(i|x)}

 \usepackage[normalem]{ulem}


\begin{document}
\title{Can SDN Accelerate BGP Convergence?\\ {\LARGE A Performance Analysis of Inter-domain Routing Centralization}}
\author{Pavlos Sermpezis\\FORTH, Greece\\sermpezis@ics.forth.gr \and
		Xenofontas Dimitropoulos\\FORTH / University of Crete, Greece\\fontas@ics.forth.gr }

\maketitle
\begin{abstract}
The Internet is composed of Autonomous Systems (ASes) or domains, i.e., networks belonging to different administrative entities. Routing between domains/ASes is realised in a distributed way, over the Border Gateway Protocol (BGP). Despite its global adoption, BGP has several shortcomings, like slow convergence after routing changes, which can cause packet losses and interrupt communication even for several minutes. To accelerate convergence, inter-domain routing centralization approaches, based on Software Defined Networking (SDN), have been recently proposed. Initial studies show that these approaches can significantly improve performance and routing control over BGP. In this paper, we complement existing system-oriented works, by analytically studying the gains of inter-domain SDN. We propose a probabilistic framework to analyse the effects of centralization on the inter-domain routing performance. We derive bounds for the time needed to establish data plane connectivity between ASes after a routing change, as well as predictions for the control-plane convergence time. Our results provide useful insights (e.g., related to the penetration of SDN in the Internet) that can facilitate future research. We discuss applications of our results, and demonstrate the gains through simulations on the Internet AS-topology.

\end{abstract}

\section{Introduction}\label{sec:intro}
The Border Gateway Protocol (BGP) is globally used, since the early days of the Internet, to route  traffic between \textit{Autonomous Systems} (ASes) or \textit{domains}, i.e., networks belonging to different administrative entities. BGP is a distributed, shortest path vector protocol, over which ASes exchange routing information with their neighbors, and establish route paths. 

Although BGP is known to suffer from a number of issues related to security~\cite{Kent-secure-BGP-JSAC-2000,Subramanian-listen-whisper-NSDI-2004}, or slow convergence~\cite{Labozitz-Delayed-convergence-CCR-2000,Kushman-Can-Hear-CCR-2007,Oliveira-Quantifying-Path-Exploration-ToN-2009}, deployment of other protocols or modified versions of BGP is difficult, due to its widespread use, and the entailed political, technical, and economic challenges. Hence, any advances and proposed solutions, should be seamless to BGP.

Taking this into account, it has been proposed recently that Software Defined Networking (SDN) principles could be applied to improve BGP and inter-domain routing~\cite{Gupta-SDX-CCR-2014,Kotronis-CXP-SOSR-2016,Thai-Decoupling-BGP-Conext-2012,Rothenberg-Revisiting-RCP-HotSDN-2012,Bennesby-Innovating-IDrouting-AINA-2014,Lin-Seamless-Internetworking-Demo-Sigcomm-2013}. The SDN paradigm has been successfully applied in enterprise (i.e., \textit{intra}-AS) networks, like LANs, data centers, or WANs (e.g., Google). However, its application to inter-domain routing (i.e., between different ASes) has to overcome many challenges, like the potential unwillingness of some ASes to participate in the routing centralization. For instance, a small ISP might not have incentives (due to the high investment costs) to change its network configuration. This led previous works on inter-domain SDN to consider (a) partial deployment, only by a fraction of ASes, and (b) interoperability with BGP.

The proposed solutions have demonstrated that bringing SDN to inter-domain routing can indeed improve the convergence performance of BGP~\cite{Kotronis-Routing-Centralization-ComNets-2015}, offer new routing capabilities~\cite{Gupta-SDX-CCR-2014}, or lay the groundwork for new services and markets~\cite{Gibb-Outsourcing-NF-HotSDN-2012,Kotronis-CXP-SOSR-2016}. However, most of previous works are system-oriented: they propose new systems or architectures, and focus on design or implementation aspects. Hence, despite some initial evaluations (e.g., experiments, emulations, simulations) we still lack a clear understanding about the interplay between inter-domain centralization and routing performance.

To this end, in this paper, we study \textit{in an analytic way} the effects of centralization on the performance of inter-domain routing. We focus on the potential improvements on the (slow) BGP convergence, a long-standing issue that keeps on concerning industry and researchers~\cite{survey-bgp-nanog}. Our goal is to complement previous (system-oriented) works, obtain an analytic understanding, and answer questions such as: \textit{``To what extent can inter-domain centralization accelerate BGP convergence? How many ASes need to cooperate (partial deployment) for a significant performance improvement? Is the participation of certain ASes more crucial? Will all ASes experience equal performance gains?''} Specifically, our contributions are:



\begin{itemize}
\item We propose a model (Section~\ref{sec:model}) and methodology (Sections~\ref{sec:data-plane} and~\ref{sec:control-plane}) for the performance analysis of inter-domain routing centralization. To our best knowledge, we are the first to employ a probabilistic approach to study the performance of inter-domain SDN. 

\item We analyse the time that the network needs to establish connectivity after a routing change. In particular, we derive upper and lower bounds for the time needed to achieve data-plane connectivity between two ASes (Section~\ref{sec:data-plane}), and exact expressions and approximations for the time till control-plane convergence over the entire network (Section~\ref{sec:control-plane}). Our results are given by closed-form expressions, as a function of network parameters, like network size, path lengths, and number of~SDN~nodes.

\item Based on the theoretical expressions, as well as on extensive simulation results, we provide insights for potential gains of centralization, inter-domain SDN deployment strategies, network economics, etc.

\end{itemize}

We believe that our study can be useful in a number of directions. Research in inter-domain SDN can be accelerated and facilitated, since a fast performance evaluation with our results can precede and limit the volume of required emulations/simulations. The probabilistic framework we propose can be used as the basis (and be extended and/or modified) to study other problems or aspects relating to inter-domain routing, e.g., BGP prefix hijacking, or anycast. Finally, the provided insights can be taken into account in the design of protocols, systems, architectures, pricing policies, etc.

\section{Model}\label{sec:model}
\subsection{Network}
We consider a network, e.g., the whole Internet or a part of it, that consists of $N$ autonomous systems (ASes). We represent each AS as a \textit{single node} that operates as a BGP router; this abstraction that is common in related literature~\cite{Labozitz-Delayed-convergence-CCR-2000,Kotronis-Routing-Centralization-ComNets-2015}, allows to hide the details of the intra-AS structure and functionality, and focus on inter-domain routing. When two ASes are connected (transit, peering, etc., relation), we consider that a link exists between the corresponding routers, over which data traffic and BGP messages can be exchanged. 


\subsection{SDN Cluster}

ASes can be ISPs, enterprises, CDNs, IXPs, etc., belong to different administrative entities, and span a wide range of topological, operational, economic, etc., characteristics. As a result, not all ASes should be expected to be willing to cooperate for and/or participate in an inter-domain centralization effort. Routing centralization is envisioned to begin from a group of a few ASes cooperating with each other, e.g., at an IXP location~\cite{Gupta-SDX-CCR-2014,Kotronis-CXP-SOSR-2016}; then, more ASes could be attracted (performance or economics related incentives) to join the group, or form another group.


To this end, we assume that $k\in[1,N]$ ASes, i.e., a fraction of the entire network, cooperate in order to centralize their inter-domain routing. In the remainder, we refer to the set of these $k$ ASes, as the \textit{SDN cluster}\footnote{Although we use the term \textit{SDN}, our framework does not require necessarily that routing centralization is implemented on an SDN architecture.}. To avoid delving into system-specific issues of the centralization implementation, we assume the following setup, which captures main characteristics of several proposed solutions(e.g.,~\cite{Kotronis-Routing-Centralization-ComNets-2015,Rothenberg-Revisiting-RCP-HotSDN-2012,fibbing-sigcomm-2015}), and is generic enough to accommodate future solutions: ASes in the SDN cluster exchange routing information with a central entity, which we call \textit{multi-domain SDN controller}. The multi-domain SDN controller might be an SDN controller that directly controls the BGP routers of the ASes (e.g., as in~\cite{Kotronis-Routing-Centralization-ComNets-2015}), or a central server that only provides information or sends BGP messages to the ASes (e.g., similar to~\cite{fibbing-sigcomm-2015}). 


\subsection{BGP Updates}
Each node has a routing table (Routing Information Base, RIB), in which each entry contains an IP Prefix, and the corresponding AS-path (i.e., sequence of ASes) through which this prefix can be reached. RIBs are built from the information received by the neighbor ASes: upon a routing change, the ``source'' AS (e.g., the AS that originates a prefix) sends BGP updates to its neighbors to notify them about the change; when an AS receives a BGP update, it calculates the needed updates (if any) for its RIB, and sends BGP updates to its own neighbors. In this way, BGP updates propagate over the entire network, and paths to prefixes are built in a distributed way.


Let us assume that an AS receives a BGP update at time $t_{1}$ and forwards it to a neighbor AS at time $t_{2}$. We call \textit{BGP update time}, and denote $T_{bgp}$, the time between the reception of a BGP update in an AS and its forwarding to a neighbor AS, i.e., $T_{bgp} = t_{2}-t_{1}$. The BGP update times may vary a lot among different ASes and/or connections, since they depend on a number of parameters: routers' hardware/software (e.g., time to process BGP data and update RIB) and/or configuration (e.g., MRAI timers), intra-AS network structure (e.g., number of routers, topology) and/or operation (e.g,. iBGP configuration, intra-AS SDN), etc. 


Knowing all these parameters for every AS is not possible, and using (upper) bounds for $T_{bgp}$ would not lead to practical conclusions~\cite{Labozitz-Delayed-convergence-CCR-2000}. Thus, to be able to perform a useful analysis, we follow a probabilistic approach, and model the BGP update times as follows.

\begin{assumption}[BGP updates - renewal process]\label{assumption:t-bgp}
The BGP update times $T_{bgp}$ are independent and identically distributed random variables, drawn from an arbitrary distribution $f_{bgp}(t)$, with $E[T_{bgp}] = \mu_{bgp}$.
\end{assumption}
Under Assumption~\ref{assumption:t-bgp}, BGP update times are given by a renewal process. The model is very generic, since it allows to use any valid function $f_{bgp}(t)$, and thus describe a wide range of scenarios with different parameters. Real measurements can be used to make a realistic selection for the distribution $f_{bgp}(t)$, as we show in Appendix~\ref{sec:distr-t-bgp}; however, a detailed study for fitting the $f_{bgp}(t)$ is beyond the scope of this paper.


\subsection{Inter-domain SDN Routing}
Routing information in the SDN cluster propagates in a centralized way, through the multi-domain SDN controller. When an AS in the SDN cluster receives a BGP update from a neighbor AS (not in the SDN cluster), it forwards the update to the SDN controller. The SDN controller, which is aware of the topology in the SDN cluster and the connections/paths to external ASes, informs every AS in the SDN cluster about the needed changes in the routing paths. The ASes that receive the updated routes from the controller, notify their non-SDN neighbors using the standard BGP mechanism. 


Let $t_{1}$ be the time that the first AS belonging to the SDN cluster receives a BGP update from a non-SDN neighbor, and $t_{2}$ the time till \textit{all} ASes in the SDN cluster have been informed (by the controller) for the BGP updates. We denote as $T_{sdn}$ the time needed for all the SDN cluster to be informed after a member has received a BGP update, i.e., $T_{sdn} = t_{2}-t_{1}$. The times $T_{sdn}$ would depend on the system implementation. However, it was shown that system designs can achieve $T_{sdn}\ll T_{bgp}$~\cite{supercharge-me-2015}. Hence, in the remainder -for the sake of presentation- we assume that $T_{sdn}\rightarrow 0$. Nevertheless, our results can be easily modified for arbitrary $T_{sdn}$ (even for cases with $E[T_{sdn}] > E[T_{bgp}]$), without this affecting the main conclusions of the study.





\subsection{Preliminaries and Problem Statement}
\noindent In our analysis, we consider the following setup: 

Every node in the network knows at least one (BGP) path to every other node.

A node initiates a routing change that affects the inter-domain routing (e.g., node $n_{0}$ in Fig.~\ref{fig:sd-path}). This could be an announcement or withdrawal of an IP prefix, an interruption of an AS connection (e.g., a link is down), etc. Here, we consider that a node, which we call the ``source node'', announces a new IP prefix; this routing change affects the entire network, every node will install a path for this prefix in its RIB upon the reception of the BGP update.

Nodes in the SDN cluster, receive route information from the SDN controller, and add an entry in their RIB for the prefix to the source node; even if the path is not established in the node preceding in this path (e.g., in Fig.~\ref{fig:sd-path} node $n_{j}$ might receive the update before node $n_{j-1}$). In this case only the node in the SDN cluster knows how to route traffic to the new prefix, therefore, if the SDN node sends traffic to the new prefix, this would not necessarily reach the source-node. The connectivity will be established when every AS in the path has been informed about the BGP update.

BGP updates do not propagate backwards in the path; this would create loops or longer paths, which are discarded or not preferred by BGP.

We call ``SD-path'' the final path, i.e., the shortest conforming to the routing policies, between the source node (``S'') and another node (``destination'', or ``D'').

In the remainder of the paper we investigate the effects of routing centralization on: (a) the data-plane connectivity between the source node (``S'') and any node (``D'') in the network, i.e., the time needed till all nodes in an SD-path have installed the updated BGP paths after a routing change (Section~\ref{sec:data-plane}); and (b) the control-plane convergence, i.e, the time needed till the entire network has established the final paths corresponding to the routing change (Section~\ref{sec:control-plane}).
 

For ease of reference, we summarize the notation in Table~\ref{table:important-notation}.
\begin{table}
\centering
\caption{Important Notation}
\label{table:important-notation}
\begin{tabular}{|l|l|l|}
\hline
{$N$}	& {network size (total \# of nodes)}&{}\\
\hline
{$k$}	& {SDN cluster size (total \# of SDN-nodes)}&{}\\
\hline
{$T_{bgp}$}	& {BGP update time}&{}\\
\hline
{$f_{bgp}(t)$}	& {distribution of BGP update times}&{Assumption~\ref{assumption:t-bgp}}\\
\hline
{$d$}	& {path length}&{}\\
\hline
{$k^{'}$}	& {\# of SDN-nodes \textit{on a path}}&{}\\
\hline
{$T_{SD}$}	& {data-plane connectivity time in a SD-path}&{Theorem~\ref{thm:sd-path-d-k}}\\
\hline
{$T_{c}$}	& {BGP convergence time}&{Theorem~\ref{thm:expectation-and-variance-Tc}}\\
\hline
{$T_{\ell}$}	& {$\ell$-partial BGP convergence time}&{Corollary~\ref{thm:l-partial-Tc}}\\
\hline
\end{tabular}
\end{table}

\section{Data-Plane Connectivity}\label{sec:data-plane}
\subsection{Analysis}

A source node ``S'' announces a new IP prefix, and SD-path is the final path from S to a ``destination'' node D; see, e.g., Fig.~\ref{fig:sd-path}. Theorem~\ref{thm:sd-path-d-k} bounds the expectation of the time $T_{SD}$ needed to establish data-plane connectivity in the path.

\begin{figure}
\includegraphics[width=\linewidth]{./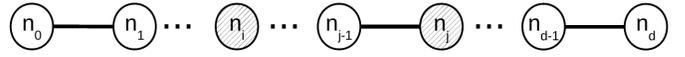}
\caption{\textit{SD path} of size $d$. The node $n_{0}$ initiates the routing change; nodes $n_{i}$ and $n_{j}$ belong to the SDN cluster.}
\label{fig:sd-path}
\end{figure}

%

\begin{theorem}\label{thm:sd-path-d-k}
The expectation of the time $T_{SD}$ in a path of length $d$ with $k^{'} \in [0,d+1]$ nodes in the SDN cluster, is bounded as follows
\begin{equation}
\hspace{-0.08cm}LB(d,k^{'})\cdot E[T_{bgp}] \leq E[T_{SD}|d,k^{'}] \leq UB(d,k^{'})\cdot E[T_{bgp}]
\end{equation}
where 
\begin{equation}
LB(d,k^{'})= \left\{
\begin{tabular}{ll}
$0$				&, $d\leq k^{'} \leq d+1$\\
$\frac{d}{k^{'}+1}$	&, $0\leq k^{'}< d$
\end{tabular}
\right.
\end{equation}
and
\begin{equation}
UB(d,k^{'})= \left\{
\begin{tabular}{ll}
$d-k^{'}+1$	&, $2\leq k^{'}\leq d+1$\\
$d$				&, $0\leq k^{'} <2$
\end{tabular}
\right.
\end{equation}
\end{theorem}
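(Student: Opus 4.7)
The plan is to decompose the path $n_0,n_1,\ldots,n_d$ into \emph{BGP-propagation segments} bounded by the source and by the SDN nodes on the path, and to bound $T_{SD}$ in both directions from this decomposition. Let the SDN positions be $0\le p_1<\cdots<p_{k'}\le d$. If $p_1=0$, all SDN nodes know the prefix at $t=0$; otherwise, BGP first reaches the cluster at the random time $T^{\ast}_{p_1}=T_1+\cdots+T_{p_1}$, after which (since $T_{sdn}\to 0$) every SDN node simultaneously begins forwarding along its downstream subpath. The time $T_{SD}$ is the maximum, over all such subpaths, of when the last node on the subpath is informed.

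The upper bound is the easier direction. A direct count shows that the total number of ``informing'' BGP hops across all subpaths is exactly $d-k'+1$ whenever $k'\ge 1$, independently of how the SDN nodes are placed: the $d-k'$ non-source non-SDN nodes are each informed by a single BGP hop from their predecessor, and the first SDN node $n_{p_1}$ also requires one BGP hop. Because each individual chain-time is a subsum of these $d-k'+1$ i.i.d.\ copies of $T_{bgp}$, we have $T_{SD}\le T_1+T_2+\cdots+T_{d-k'+1}$ almost surely; taking expectations yields $E[T_{SD}]\le(d-k'+1)\,\mu_{bgp}$. For $k'\le 1$ the cluster offers no acceleration (a lone SDN node has no peer to notify), and the sequential chain of $d$ hops gives $E[T_{SD}]=d\,\mu_{bgp}$, matching $UB(d,k')=d$.

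For the lower bound I would couple the actual process to an \emph{optimistic} one in which every SDN node (and the source) is informed instantaneously at $t=0$. Using the same BGP-update-time samples on every edge in both processes, one checks node by node that each node is informed no later in the optimistic variant than in reality---SDN-originated chains are simply shifted earlier by $T^{\ast}_{p_1}$, while $n_0$-originated chains are unchanged. Hence $T_{SD}\ge T_{SD}^{\mathrm{opt}}$ almost surely, and $E[T_{SD}]\ge E[T_{SD}^{\mathrm{opt}}]$. In the optimistic variant, the $K\le k'+1$ informed positions partition the $d$-edge path into at most $k'+1$ subpaths whose lengths sum to $d$; by pigeonhole the longest subpath has at least $d/(k'+1)$ edges, and its propagation time is a sum of that many i.i.d.\ copies of $T_{bgp}$. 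This gives $E[T_{SD}]\ge(d/(k'+1))\,\mu_{bgp}$. When $k'\ge d$, the pigeonhole bound falls below one BGP time and the trivial $LB=0$ is used.

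I expect the main subtlety to be setting up the coupling cleanly for the lower bound: one has to be precise about which BGP times are reused on shared edges versus drawn freshly, so that both processes remain faithful to the renewal model of Assumption~\ref{assumption:t-bgp}. Once the monotonicity of forward-only BGP propagation is invoked, the inequality propagates node by node along the path. The remaining combinatorics---verifying the total hop count $d-k'+1$ and the segment count $\le k'+1$ under every arrangement---is routine case analysis distinguishing whether $n_0$ and $n_d$ lie in the SDN cluster.
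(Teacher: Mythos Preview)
Your approach is essentially the paper's: decompose the path into segments at the SDN positions, couple to a pessimistic process ($T'=T^{\ast}_{p_1}$) for the upper bound and to an optimistic one ($T'=0$) for the lower bound, then extremize over segment lengths. One variation worth noting: for the upper bound the paper argues that the worst placement is when the $k'$ SDN nodes sit consecutively on the path and simply evaluates that case, whereas you observe that every chain-time is a subsum of the same fixed collection of $d-k'+1$ hop-times and bound the maximum by the full sum. Your route is cleaner and avoids having to justify which placement is extremal, while the paper's route makes the extremal configuration explicit; both land at $(d-k'+1)\mu_{bgp}$.

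There is one slip in your lower-bound pigeonhole step (which the paper's own proof shares in slightly different language). You partition the $d$ edges into at most $k'+1$ subpaths and assert that the longest subpath's \emph{propagation time} equals its edge-count. But an interior subpath $[q_j,q_{j+1}]$ ends at an SDN node $n_{q_{j+1}}$ that is already informed at $t=0$ in the optimistic process, so only $q_{j+1}-q_j-1$ BGP hops are needed to cover the intermediate nodes, not $q_{j+1}-q_j$; only the last subpath (ending at the non-SDN node $n_d$) uses all its edges. Pigeonhole on the actual hop-counts therefore yields $\max_j \ell_j \ge (d-k')/(k'+1)$ rather than $d/(k'+1)$. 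The paper's argument has the analogous off-by-one: it writes $T_{SD}=\max\{T_1,T'+T_2\}$ with $T_1$ the BGP time to reach the SDN node $n_j$, whereas $n_j$ is in fact informed at $T'\le T_1$ and the first segment's true contribution to $T_{SD}$ is $T^{\ast}_{j-1}$, one hop shorter.
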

\begin{proof}
The proof is given in Appendix~\ref{sec:proof-of-thm-sd-path-d-k}.
\end{proof}

We provide the intuition behind the proof of Lemma~\ref{thm:sd-path-d-k} in relation to Fig.~\ref{fig:sd-path}. When a node in the SD-path that belongs to the SDN cluster receives the BGP update (e.g., node $n_{i}$), then every other node in the SDN cluster (e.g., node $n_{j}$) is informed about the update, sometimes even before its preceding node(s) (e.g., $n_{j-1}$). Hence, the BGP update can propagate on \textit{different sections} of the SD-path \textit{simultaneously} (e.g,. from $n_{i}$ up to $n_{j-1}$, and -at the same time- from $n_{j}$ to $n_{d}$). The length of these SD-path sections (which determine the BGP update propagation time) depend on the positions of the SDN nodes on the path. The bounds are derived based on the ``best'' and ``worst'' possible positions of the SDN nodes on the SD-path.



\subsection{Network Topology and Routing Centralization}\label{sec:topo-and-sdn}
Based on Theorem~\ref{thm:sd-path-d-k} we can calculate the average time $E[T_{SD}|d]$ over all paths of the same size $d$ (or, equivalently, for an average path of size $d$), using the property of conditional expectation:
\begin{equation}\label{eq:Tsd-conditional-k}
E[T_{SD}|d] = \sum_{i=0}^{d+1} E[T_{SD}|d,k^{'}=i] \cdot P\{k^{'}=i|d\}
\end{equation}
where $P\{k^{'}=i|d\}$ denotes the probability that $i$ nodes (out of the total $d+1$ nodes on the path) belong to the SDN cluster.

\textbf{Topology-independent SDN cluster.} If the SDN cluster is formed independently of the network topology, the quantity $k^{'}$ follows an \textit{hypergeometric distribution} with parameters $N$ (population size), $k$ (number of successes in the population), and $d+1$ (number of draws), and probability mass function
\begin{equation}\label{eq:hypergeometric-distribution}
P\{k^{'}=i|d\} = \frac{\binom{k}{i}\cdot \binom{N-k}{d+1-i}}{\binom{N}{d+1}}
\end{equation}

\textbf{Topology-related SDN cluster.} On the other hand, if the participation of ASes in the SDN cluster is related to the topology, e.g., because ASes are explicitly selected based on topological characteristics (e.g.,  centrality), or the incentives of cooperation are inherently related to their connectivity (e.g., SDN deployment on tier-1 ISPs, or IXPs~\cite{Gupta-SDX-CCR-2014,Kotronis-CXP-SOSR-2016}), then $k^{'}$ might not be captured accurately by \eq{eq:hypergeometric-distribution}. Therefore, the actual distribution $P\{d,k^{'}\}$ needs to be calculated; however, this might be a difficult (or infeasible) task. 

Alternatively, in certain cases, the distribution $P\{k^{'}=i|d\}$ could be approximated with variations of the standard hypergeometric distribution that are able to take into account the fact that different nodes appear in shortest paths with different probabilities. For instance the \textit{Fisher's noncentral hypergeometric distribution} can be used to consider biased selection of ASes for the SDN cluster: let $\omega_{i}$ be the betweenness centrality~\cite{Newman:Networks-book} of a node $n_{i}$, and $\omega_{sdn}$ and $\omega_{bgp}$ the averages among the nodes in the respective sets, i.e., 
\begin{equation*}
\omega_{sdn} = \frac{\sum_{n_{i}\in SDN} \omega_{i}}{|\{n_{i}:n_{i}\in SDN\}|}~,~~\omega_{bgp} = \frac{\sum_{n_{i}\notin SDN} \omega_{i}}{|\{n_{i}:n_{i}\notin SDN\}|}
\end{equation*}
Denoting $\omega = \frac{\omega_{sdn}}{\omega_{bgp}}$, the probability $P\{k^{'}=i\}$ is approximately given by
\begin{equation}\label{eq:fisher-distribution}
P\{k^{'}=i|d\} = \frac{\binom{k}{i}\cdot \binom{N-k}{d+1-i}\cdot \omega^{i}}{\sum_{j = 0}^{d+1}  \binom{k}{j}\cdot \binom{N-k}{d+1-j}\cdot \omega^{j}}
\end{equation}

In the above distribution, the higher the betweenness centrality of the ASes in the SDN cluster, the more skewed towards the higher values of $k^{'}$ the distribution $P\{k^{'}|d\}$ is, and, thus, the lower the delay $T_{SD}$.

\textbf{Internet AS-topology vs. SDN cluster.} We now focus on the Internet topology, which is of higher interest, and apply our -generic- theoretical results to investigate the effects of routing centralization.

We first build the Internet AS graph from a large experimentally collected dataset~\cite{AS-relationships-dataset} (consisting of $N=55567$ ASes), and infer routing policies over existing links based on the Gao-Rexford conditions~\cite{stable-internet-routing-TON-2001} (this is the most common approach in related literature; more details can be found in Appendix~\ref{sec:simulations-internet}). We consider about $10^{6}$ different SD-paths, from which we calculate the path length distribution $P\{d\}$ (see Fig.~\ref{fig:path-length-distribution}), and the betweenness centrality for each node.

We consider different scenarios with variable SDN cluster size $k = 1, ..., N$, where the set of nodes in the SDN cluster are selected (a) \textit{randomly}, or (b) based on their \textit{betweenness centrality} (i.e., the top $k$ nodes with the highest betweenness centrality values). From Theorem~\ref{thm:sd-path-d-k}, we calculate the lower and upper bounds for the average $T_{SD}$ time over all path lengths, i.e., $E[T_{SD}] = \sum_{d}E[T_{SD}|d]\cdot P\{d\}$, where $E[T_{SD}|d]$ is given by \eq{eq:Tsd-conditional-k}, and $P\{k^{'}|d\}$ from \eq{eq:hypergeometric-distribution} or \eq{eq:fisher-distribution} for the aforementioned cases (a) and (b), respectively.

In Fig.~\ref{fig:bounds-random-betweenness-CAIDA}, we present the lower (LB) and upper (UB) bounds for $E[T_{SD}]$ for different SDN cluster sizes $k$, normalized over the case without routing centralization ($k=0$). When the nodes in the SDN cluster are selected \textit{randomly}, i.e., independently of the topology, a significant decrease in the average connectivity time can be achieved only when at least $20\%$ (around $k=10000$) of the nodes participate in the SDN cluster (note the log scale of the x-axis). This observation, which is in accordance with previous findings~\cite{Kotronis-Routing-Centralization-ComNets-2015}, is a rather grim message for the efficiency of (a randomly deployed) inter-domain routing centralization, since even if a few hundreds or thousands of ASes were willing to cooperate, the gains would be marginal.

On the contrary, as is shown in Fig.~\ref{fig:bounds-random-betweenness-CAIDA}, when the SDN cluster consists of ASes with high \textit{betweenness centrality}, with only a few tens of nodes the average delay can decrease up to $50\%$. This new insight (compared to previous understanding of the effects of routing centralization) brings optimism for the feasibility of inter-domain centralization: even if deployed incrementally, e.g., starting from a few tier-1 ISPs\footnote{Large ISPs are central in the Internet topology, with high betweenness centrality. For example, the top-10 ASes with the highest betweenness centrality values belong to the list of the top-50 ASes with the largest number of ASes in customer cone~\cite{as-rank-website}}, the Internet can immediately see significant performance improvements.

\begin{figure}
\centering\includegraphics[width=\linewidth]{./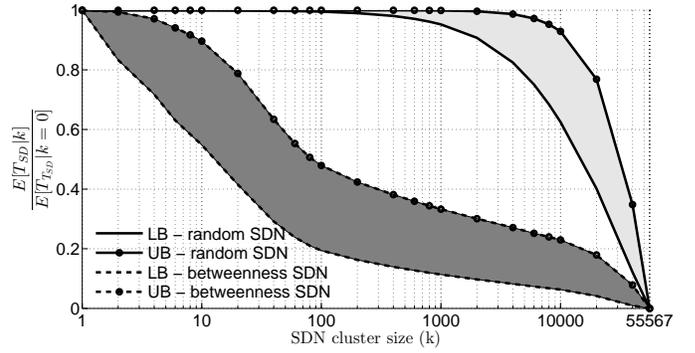}
\caption{Bounds for the average data-plane connectivity time, normalized over the no SDN scenario, i.e., $\frac{E[T_{SD}|k]}{E[T_{SD}|k=0]}$, in the Internet AS-graph. Upper (UB) and lower (LB) bounds enclose the colored areas: nodes in the SDN cluster are selected (i) \textit{randomly} (light grey area) and (ii) with decreasing \textit{betweenness centrality} (dark grey area).}
\label{fig:bounds-random-betweenness-CAIDA}
\end{figure}

\begin{figure}
\centering
\includegraphics[width=\linewidth]{./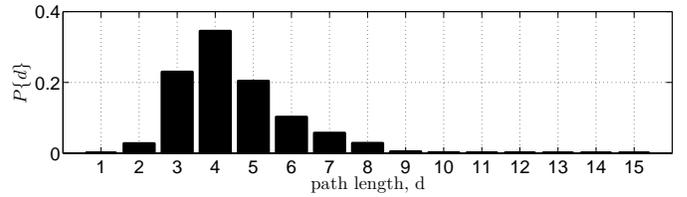}
\caption{Path length distribution on the Internet AS-topology.}
\label{fig:path-length-distribution}
\end{figure}

\subsection{Simulation Results and Implications}
To validate our theoretical results, we conduct simulations on scenarios with varying (a) \textit{network topologies}: synthetic graphs such as full-mesh, Poisson graph, Barabasi-Albert (power low graph), Newman-Watts-Strogatz (small world graph), as well as, the real Internet AS-graph; (b) \textit{SDN cluster sizes}: $k=0,...,N$; and (c) \textit{distributions} $f_{bgp}(t)$: exponential with rate $\lambda=1$ and uniform in $[0,2]$, both with $\mu_{bgp}=1$. In the following we present a subset of representative results, and discuss some important observations.

The average values of $T_{SD}$ in the simulations, are \textit{always} within the bounds of Theorem~\ref{thm:sd-path-d-k} for all pairs $\{d,k^{'}\}$ in every scenario we tested.

In Fig.~\ref{fig:bounds} we compare the simulation results for $E[T_{SD}|d]$ (average over all $k^{'}$) against the theoretical bounds, which are calculated from \eq{eq:Tsd-conditional-k} by using the expressions of \eq{eq:hypergeometric-distribution} (topology-independent SDN cluster) and Theorem~\ref{thm:sd-path-d-k}. For both cases of $f_{bgp}(t)$ 
, the bounds are very tight for $k=50$, when only a small fraction ($5\%$) of the nodes belong to the SDN cluster (top plots). For larger SDN cluster sizes ($k=200$, or $20\%$; bottom plots), the bounds are still very tight for small path lengths (e.g., $d<4$), while the range \textit{[lower bound, upper bound]} increases with $d$.  In summary, the accuracy of the bounds increases for smaller $k$ or $d$.

For $k=200$ and $d=7$ (rightmost points in bottom plots), while the simulated value lies in the middle of the two bounds in the exponential $f_{bgp}(t)$ case (Fig.~\ref{fig:bounds-exp}), it is closer to the upper bound in the uniform $f_{bgp}(t)$ case (Fig.~\ref{fig:bounds-uni}). Among all the scenarios we tested, we did not observe any tendency of the values to be closer either to the upper or lower bound. This is an indication that there is probably a limit on how much tighter bounds can be derived.

In Table~\ref{table:bounds}, we show how the times $T_{SD}$ change for increasing SDN cluster size $k$. Comparing the two cases, $d=2$ and $d=5$, we can see that the effect of the routing centralization is higher for longer paths. The simulated data-plane connectivity times decrease more and faster for $d=5$, and this is captured also by the relative changes of the theoretical bounds. 

Similar behavior is observed also in Fig.~\ref{fig:t-sd-betweenness-caida}, in simulation scenarios on the Internet topology where the SDN cluster comprises nodes with high betweenness centrality. For $k=10$, paths of length $d=3$, $d=6$, and $d=9$, see a relative decrease on the average connectivity time of about $10\%$, $20\%$, and $40\%$, respectively. The corresponding values for $k=50$ are about $25\%$, $40\%$, and $60\%$ (i.e., almost double than $k=10$), while for larger SDN cluster sizes ($k>50$) the extra gain is small.

These findings (Table~\ref{table:bounds} and Fig.~\ref{fig:t-sd-betweenness-caida}) demonstrate that ASes which have (on average) longer paths to other ASes, e.g., stub networks or small ISPs at the edge of the Internet, would see a higher benefit from routing centralization than central ASes (e.g., tier-1 ISPs) or well connected ASes such as CDNs~\cite{Chiu-One-Hop-Away-IMC-2015}. Hence, the \textit{node closeness centrality}~\cite{Newman:Networks-book} can be used as a metric to evaluate (or rank) the improvement in the performance of ASes: the lower the closeness centrality, the higher the benefit from routing centralization.

The above observation sheds light on an interesting trade-off related to which nodes participate to the SDN cluster and which nodes benefit from routing centralization. As shown in Section~\ref{sec:topo-and-sdn}, nodes with high betweenness centrality improve more the performance if they participate in the SDN cluster (see, e.g., Fig.~\ref{fig:bounds-random-betweenness-CAIDA}). However, their own gain is smaller since they are central nodes in the network (betweenness and closeness centrality are positively correlated measures). As a result, incentives -other than performance- might be needed for attracting central ASes to cooperate for routing centralization. For instance, tier-1 ISPs could deploy inter-domain centralization in order to offer new services (related to the improved BGP convergence performance) to their customers.

\begin{figure}
\centering
\subfigure[$T_{bgp}\sim exponential(\lambda=1)$]{\includegraphics[width=0.45\linewidth]{./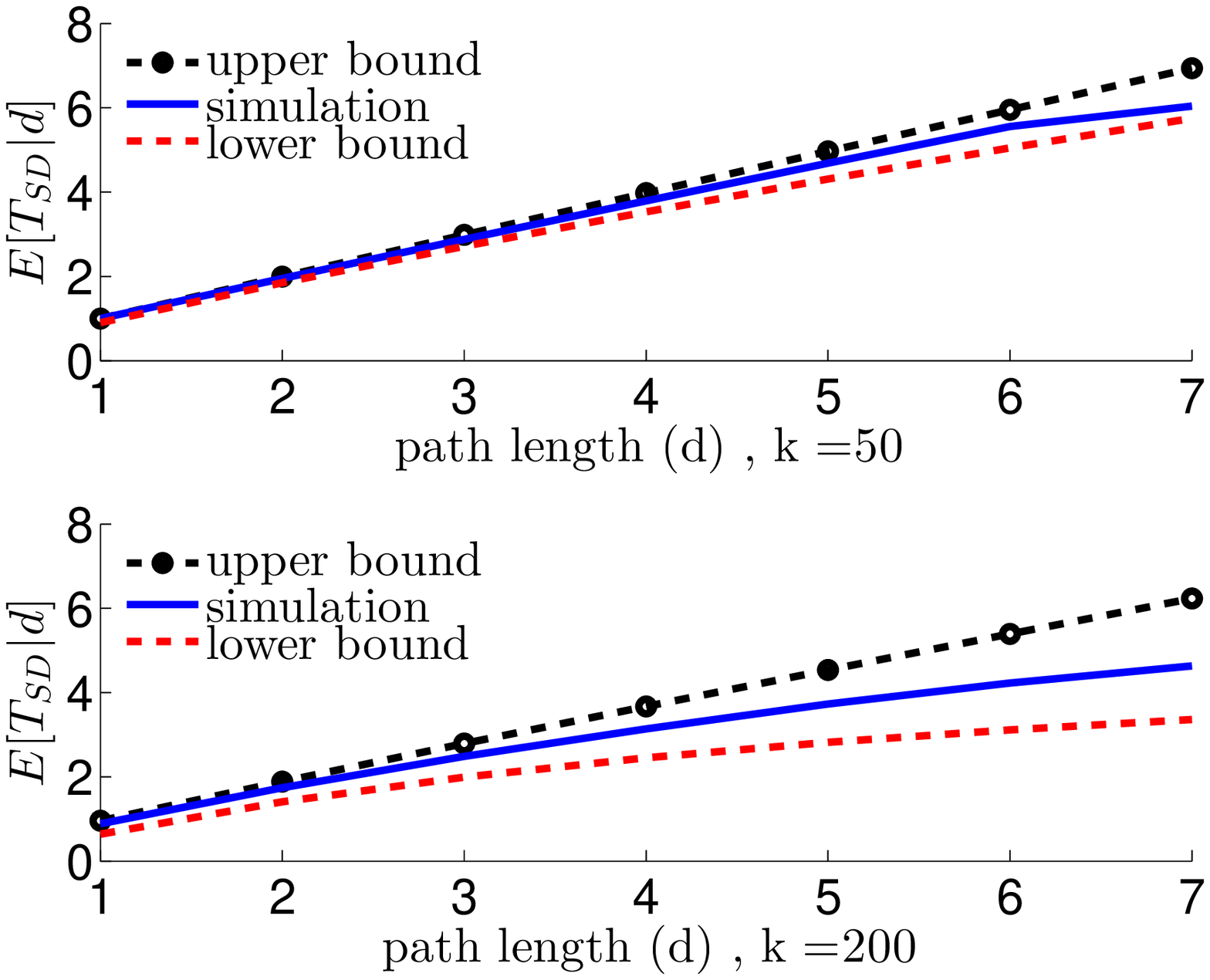}\label{fig:bounds-exp}}
\hspace{0.05\linewidth}
\subfigure[$T_{bgp}\sim uniform(0,2)$]{\includegraphics[width=0.45\linewidth]{./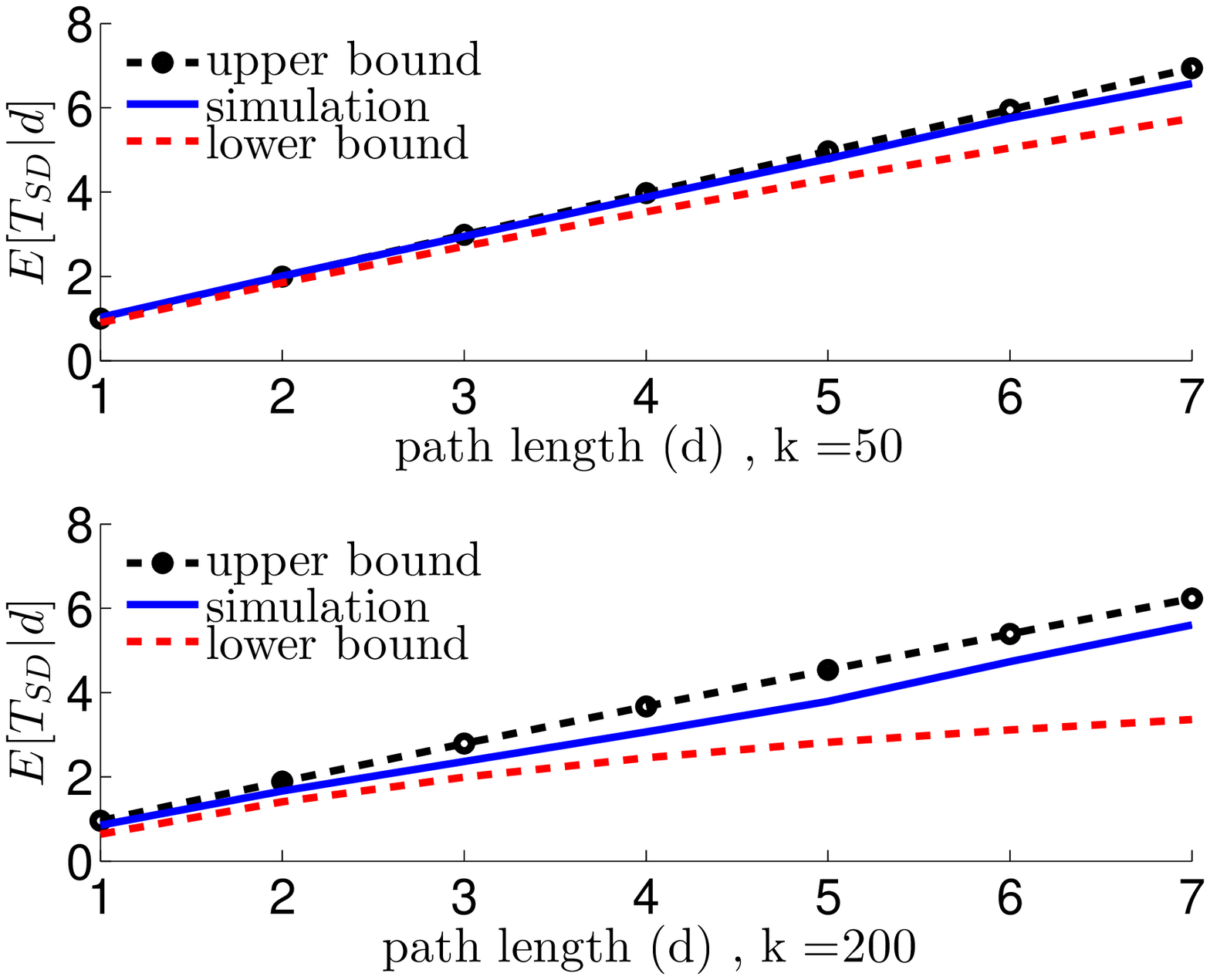}\label{fig:bounds-uni}}
\caption{Data-plane connectivity time $E[T_{SD}|d]$ (y-axis), vs. size of network cluster $k$ (x-axis). Simulation scenarios: Poisson graph network topology of size $N=1000$ and $p=0.005$, with (a) $T_{bgp}\sim exponential(\lambda=1)$ and (b) $T_{bgp}\sim uniform(0,2)$.}
\label{fig:bounds}
\end{figure}

\begin{table*}
\centering
\caption{Data-plane connectivity time normalized over the no SDN scenario, $\frac{E[T_{SD}|k]}{E[T_{SD}|k=0]}$.}
\label{table:bounds}
\begin{tabular}{|c|cccc|}
\hline
{Upper bound / \textbf{Simulation} / Lower bound}		& {$k=20$}								& {$k=50$}		& {$k=100$}		& {$k=200$}\\
\hline
{$d=2$}				& {99.9\% / \textbf{99.2\%} / 97.0\%}	& {99.6\% / \textbf{97.7\%} / 92.5\%}	& {98.6\% / \textbf{92.9\%} / 85.1\%}	& {94.4\% / \textbf{85.1\%} / 70.4\%}	\\
{$d=5$}				& {99.9\% / \textbf{97.8\%} / 94.2\%}	& {99.3\% / \textbf{93.9\%} / 86.2\%}	& {97.4\% / \textbf{86.4\%} / 74.5\%}	& {90.1\% / \textbf{75.6\%} / 56.4\%}	\\
\hline
\end{tabular}
\end{table*}

\begin{figure}
\centering
\includegraphics[width=\linewidth]{./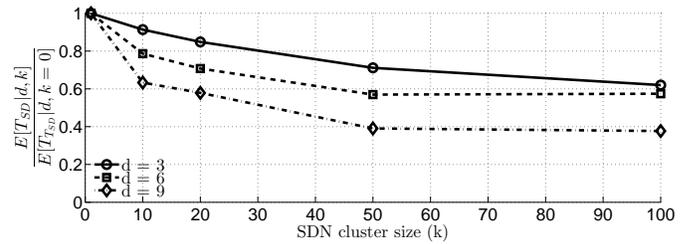}
\caption{Data-plane connectivity time normalized over the no SDN scenario $\frac{E[T_{SD}|d,k]}{E[T_{SD}|d,k=0]}$ (y-axis) vs. size of network cluster $k$ (x-axis). Curves correspond to the averages for three different path lengths (i) $d=3$, (ii) $d=6$, and (iii) $d=9$, in simulation scenarios over the Internet AS-graph, with $T_{bgp}\sim exponential(\lambda=1)$, and nodes in the SDN cluster selected with decreasing \textit{betweenness centrality}.}
\label{fig:t-sd-betweenness-caida}
\end{figure}

\section{Control-Plane Convergence}\label{sec:control-plane}
In this section we derive results for the control plane convergence time, i.e., the time needed after a routing change till \textit{every} AS in the network has updated and established the final (i.e., shortest, conforming to routing policies) paths.

The control-plane convergence time is equal to the maximum of the $T_{SD}$ times over all the SD-paths. Due to the involved order statistics, proceeding similarly to Section~\ref{sec:data-plane}, would lead to complex computations and loose bounds. Hence, in this section, we proceed to an approximate analysis that allows us to provide useful insights for the effects of routing centralization on the BGP convergence time. 


Specifically, we first narrow the Assumption~\ref{assumption:t-bgp}, by assuming that the renewal process for the BGP update times $T_{bgp}$ is a Poisson process; this allows to study the problem using a Markovian framework. 
Our experiments and measurements in the real Internet (Appendix~\ref{sec:distr-t-bgp}), support the selection of the Poisson assumption for the times $T_{bgp}$.

\begin{assumption}[BGP updates - Poisson process]\label{assumption:t-bgp-poisson}
The times $T_{bgp}$ are iid random variables, drawn from an exponential distribution with rate $\lambda = \frac{1}{\mu_{bgp}}$, and mean value $E[T_{bgp}] =~\mu_{bgp}$.
\end{assumption}

Under Assumption~\ref{assumption:t-bgp-poisson}, we can build a \textit{transient} Markov Chain  to model the propagation of BGP updates, where each state denotes the set of nodes that have updated the paths in their RIBs. However, analysing such a Markov chain is still very complex, since the state space contains $2^{N}-1$ states, and the transition rates depend on the topology of the network, which cannot be known exactly in most practical cases.



To this end, we first consider the case of a full-mesh network (a common approach in related literature~\cite{Kotronis-Routing-Centralization-ComNets-2015,Labozitz-Delayed-convergence-CCR-2000,convergence-properties-BGP_ComNets-2011}), which can be described by a much simpler Markov chain, and compute the control-plane convergence time as a function of the network size $N$, SDN cluster size $k$, and rate $\lambda$ (Section~\ref{sec:full-mesh}). Then, we generalize the results, and derive approximations for sparse topologies, which are of higher practical interest (Section~\ref{sec:poisson}). Simulation results show that the insights stemming from our analysis are valid also for the (much more complex) Internet AS-graph (Section~\ref{sec:control-plane-validation}).

\subsection{Analysis: Full-Mesh Topology}\label{sec:full-mesh}

In a full-mesh network, every pair of nodes has a direct connection, and, thus, the shortest path (i.e., BGP path) to each node is the direct path of size $d=1$. Hence, every node receives the BGP update from the source node. Moreover, since all nodes in the SDN cluster are informed the time any of them receives the BGP update ($T_{sdn}\ll T_{bgp}$, or $T_{sdn}\rightarrow 0$), the SDN cluster can be considered as a single node.

As a result, a Markov Chain as this in Fig.~\ref{fig:mc-steps} can be used to model the propagation of BGP updates. Each time a node (a single AS or the SDN cluster) receives the BGP update, the Markov chain moves to the next state. 
We start from the moment/state (time $t=0$ / state $0$) just before the routing change takes place. Control-plane convergence is achieved at state $C$, when all nodes have the updated paths in their RIBs.

To calculate the transition rates $\lambda_{i}^{'}$, we first define the following quantities.
\begin{definition}[bgp-eligible nodes \& bgp-degree]\label{def:bgp-degree}~\\
$-$ A \texttt{bgp-eligible} node is a node the (a) has not received the BGP update, and (b) lies on a BGP (shortest) path where the previous node has the updated route in its RIB.~\\ 
$-$ The \texttt{bgp-degree} at step $i$, $D(i)$, is the number of nodes that are bgp-eligible nodes.
\end{definition}
Under the above definition, the time to move from a step/state $i$ to the next step/state, is the time needed till the \textit{first} of the bgp-eligible nodes receives the update. Under Assumption~\ref{assumption:t-bgp-poisson}, it follows that this time is the minimum of $D(i)$ iid random variables exponentially distributed with rate $\lambda$. Therefore the transition time is also exponentially distributed with rate (i.e., the transition rate)
\begin{equation}\label{eq:transition-rate-lambda-prime}
\lambda_{i}^{'} = \lambda\cdot D(i)
\end{equation}

Now, in a full-mesh network, bgp-eligible nodes are all the nodes that have not received the BGP update (since all nodes are directly connected to the source node). We denote as $n(i)$ the number of nodes that have received the BGP update at step $i$. From the above discussion it follows $n(i)$ depends on which step the SDN cluster received the BGP update. Denoting as $x$ the state/step that the first node in the SDN cluster receives the BGP update, we can write
\begin{equation}\label{eq:n(i)}
n(i|x) = \left\{
\begin{tabular}{ll}
$i$	& $, i\leq x$ \\
$i+k-1$	& $, i>x$
\end{tabular}
\right.
\end{equation}
and the bgp-degree is easily shown to be given by Lemma~\ref{thm:Dix-full-mesh}.
\begin{lemma}\label{thm:Dix-full-mesh}
The \bgp $\Dix$, $i\in[1,N-k], x\in[0,N-k]$, in a full-mesh network topology is given by
\begin{equation}
\Dix = N-n(i|x)
\end{equation}
\end{lemma}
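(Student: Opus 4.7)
The plan is to exploit the full-mesh structure to collapse Definition~\ref{def:bgp-degree} into a pure counting argument.

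First, I will use the full-mesh hypothesis to observe that the BGP shortest path from the source to any other node has length one, so the predecessor on that path is always the source itself. This means that condition~(b) of the bgp-eligibility definition reduces, for every non-source node, to the statement ``the source holds the updated route in its RIB.''

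Second, I will argue that the source satisfies this condition throughout the relevant evolution of the Markov chain: since the source originates the announcement, its RIB carries the updated route from state~$1$ onward, i.e., for every $i \in [1, N-k]$. Consequently, at every such step $i$, a node is bgp-eligible precisely when it is a non-source node that has not yet received the BGP update, so the bgp-eligible set coincides with the set of non-updated non-source nodes.

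Finally, I will close with counting. By~\eqref{eq:n(i)}, at step $i$ the total number of nodes holding the updated route is $n(i|x)$---this includes the source, the ordinary BGP learners contributing one node per step, and, for $i>x$, the $k-1$ extra SDN cluster members informed by the controller instantaneously after the first SDN member is reached. Therefore the complementary set of non-updated nodes has cardinality $N - n(i|x)$, which by the previous step is exactly $D(i|x)$.

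The only delicate point I anticipate is the bookkeeping around the source and the SDN jump: I will verify at the boundary step $i=x$, and at the edge case $x=0$ (first SDN member reached together with the source), that \eqref{eq:n(i)} indeed enumerates the updated nodes as described, so that the identity $D(i|x)=N-n(i|x)$ holds uniformly on the stated domain $i\in[1,N-k]$, $x\in[0,N-k]$. Beyond this, no real obstacle is expected, as the lemma is essentially an immediate consequence of the full-mesh topology.
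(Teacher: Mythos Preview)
Your proposal is correct and follows exactly the paper's reasoning: the paper does not give a formal proof but simply observes, just before stating the lemma, that ``in a full-mesh network, bgp-eligible nodes are all the nodes that have not received the BGP update (since all nodes are directly connected to the source node),'' which is precisely your argument that condition~(b) of Definition~\ref{def:bgp-degree} collapses to triviality when every node's predecessor on its shortest path is the source. Your write-up is more careful about the boundary bookkeeping around $x$ and the SDN jump, but the substance is identical.
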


Up to this point, we have calculated the transition rates of the Markov chain of Fig.~\ref{fig:mc-steps} conditionally on $x$ (see, \eq{eq:transition-rate-lambda-prime} and Lemma~\ref{thm:Dix-full-mesh}). To compute the control-plane convergence time, we need also the probabilities $P_{sdn}(x)$ that the SDN cluster receives the BGP update at step $x$. In the following lemma, we derive the expression for the probabilities $P_{sdn}(x)$.

\begin{lemma}\label{thm:P-sdn}
The probability that the SDN cluster receives the update at step $x$ is given by
\begin{equation}\label{eq:P-sdn}
P_{sdn}(x) = \frac{k}{N-x}\cdot \prod_{j=0}^{x-1}\left(1-\frac{k}{N-j}\right)
\end{equation}
\end{lemma}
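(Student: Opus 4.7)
The plan is to compute $P_{sdn}(x)$ directly by a sequential conditional-probability argument, exploiting the symmetry that the Poisson assumption provides. First, I would invoke Assumption~\ref{assumption:t-bgp-poisson}: at any state of the full-mesh Markov chain, the waiting times until each currently bgp-eligible real node receives the update are iid $\operatorname{Exp}(\lambda)$ variables. A standard property of iid exponentials then implies that, conditional on the state, each bgp-eligible node is equally likely to be the one whose clock fires first. Equivalently, the sequence in which nodes receive the update is a uniformly random ordering of the eligible pool, with the proviso that the first SDN hit instantly ``absorbs'' all $k$ SDN members because $T_{sdn}\to 0$.

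Next, I would decompose the event $\{\text{SDN cluster is hit at step }x\}$ into the conjunction of (i) the first $x$ transitions each pick a non-SDN node, and (ii) the $(x{+}1)$-th transition picks one of the $k$ SDN nodes. By the uniform-selection-without-replacement picture, after $j$ non-SDN picks the eligible pool contains $N-j$ real nodes, $k$ of which are SDN; hence the probability of avoiding SDN at that transition is $1-\tfrac{k}{N-j}$, and the probability of hitting SDN is $\tfrac{k}{N-j}$. Applying the chain rule yields
\begin{equation*}
P_{sdn}(x) = \left[\prod_{j=0}^{x-1}\left(1-\frac{k}{N-j}\right)\right]\cdot \frac{k}{N-x},
\end{equation*}
which is the claimed formula. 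A sanity check: for $x=0$ the product is empty and $P_{sdn}(0)=k/N$; the probabilities $P_{sdn}(0),P_{sdn}(1),\ldots,P_{sdn}(N-k)$ telescope to~$1$, as one expects since SDN must eventually be hit at some step.

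The only subtlety to handle carefully is the bookkeeping around the ``SDN-as-one-unit'' abstraction: the cluster contributes $k$ to the count of bgp-eligible real nodes (so the selection probability scales linearly with $k$), even though a single SDN hit advances the Markov chain by only one logical step (all $k$ members are synchronized instantly). Once this is noted, the tally of the remaining pool $N-j$ after $j$ non-SDN picks follows, and the remainder of the argument is an urn-scheme computation with $k$ indistinguishable ``SDN balls'' and $N-k$ ``non-SDN balls.'' I do not anticipate any analytical obstacle of substance; the hardest bit is stating the Poisson-symmetry step precisely enough that the conditional probabilities $k/(N-j)$ are transparently correct at every transition.
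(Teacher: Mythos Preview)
Your proposal is correct and follows essentially the same sequential conditional-probability argument as the paper: the paper computes $P_{sdn}(0)=k/N$, then $P_{sdn}(1)=\frac{k}{N-1}\bigl(1-\frac{k}{N}\bigr)$, and says ``proceeding recursively'' to obtain the product formula, which is exactly your chain-rule/urn derivation. The only cosmetic mismatch is that in the paper the case $x=0$ corresponds to the \emph{source node itself} belonging to the SDN cluster (not a ``transition''), so your phrasing in terms of the $(x{+}1)$-th transition is off by one in wording, though your formula and sanity check $P_{sdn}(0)=k/N$ are correct.
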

\begin{proof}
The proof is given in Appendix~\ref{sec:proof-of-thm-P-sdn}.
\end{proof}

%

\begin{figure}
\includegraphics[width=\linewidth]{./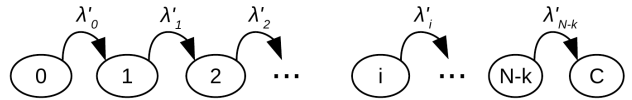}
\caption{Markov Chain for the BGP update dissemination process.}\label{fig:mc-steps}
\end{figure}

Now, using Lemmas~\ref{thm:Dix-full-mesh} and~\ref{thm:P-sdn}, we proceed and derive the following result for the distribution of the control-plane convergence time $T_{c}$. Specifically, Lemma~\ref{thm:MGF-Tc} gives a closed form expression for the moment generating function (MGF)\footnote{
We remind that the MGF of a random variable $X$ is defined as $M_{X}(\theta) = E[e^{\theta\cdot X}]$, $\theta\in\mathbb{R}$, and completely characterizes a random variable (equivalently to its distribution), since all the moments of $X$ can be calculated from its MGF.
} of the time $T_{c}$.


\begin{lemma}\label{thm:MGF-Tc}
The moment generating function (MGF) $M_{T_{c}}(\theta)$ of the BGP convergence time $T_{c}$ is given by
\begin{equation}
M_{T_{c}}(\theta) = \sum_{x=0}^{N-k}\prod_{i=1}^{N-k}\left(1-\frac{\theta}{\lambda\cdot D(i|x)}\right)^{-1} \cdot P_{sdn}(x)
\end{equation}
\end{lemma}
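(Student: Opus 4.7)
The plan is to compute the MGF by conditioning on $x$, the (random) step at which the SDN cluster first receives the update, and exploiting the fact that conditional on $x$ the convergence time $T_c$ is a sum of independent exponential sojourn times in the Markov chain of Fig.~\ref{fig:mc-steps}.

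First, I would fix $x$ and describe $T_c|x$ as a sum of $N-k$ independent sojourn times, one per state visited before absorption. By the Markov property, the time spent in state $i$ is independent of the times spent in the other states, and by Assumption~\ref{assumption:t-bgp-poisson} together with \eq{eq:transition-rate-lambda-prime} and Lemma~\ref{thm:Dix-full-mesh}, it is exponentially distributed with rate $\lambda\cdot D(i|x)$. Here it is crucial that, conditional on $x$, the deterministic ``jump'' of size $k$ produced by the SDN cluster is already absorbed into the definition of $n(i|x)$, so that the state indexing $i=1,\dots,N-k$ correctly enumerates exactly one exponential transition per step, including the step that instantaneously accounts for the whole cluster (since $T_{sdn}\to 0$).

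Next, I would invoke the standard fact that for an exponential random variable $X$ with rate $\mu$, $M_X(\theta)=E[e^{\theta X}]=(1-\theta/\mu)^{-1}$ (for $\theta<\mu$), and that the MGF of a sum of independent random variables is the product of the individual MGFs. Applying this to the conditional decomposition gives
\begin{equation*}
M_{T_c|x}(\theta)=\prod_{i=1}^{N-k}\left(1-\frac{\theta}{\lambda\cdot D(i|x)}\right)^{-1}.
\end{equation*}

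Finally, I would uncondition using the law of total expectation, $M_{T_c}(\theta)=E[e^{\theta T_c}]=\sum_{x=0}^{N-k} M_{T_c|x}(\theta)\cdot P_{sdn}(x)$, and substitute $P_{sdn}(x)$ from Lemma~\ref{thm:P-sdn} to recover the claimed closed form. The one subtlety worth verifying carefully (and likely the only real obstacle in a clean write-up) is the indexing alignment between the Markov-chain step $i$, the counter $n(i|x)$, and the bookkeeping that one of these ``transitions'' corresponds to the SDN cluster being updated en masse; once this is pinned down, the conditional exponentials, the MGF product, and the total-expectation unconditioning are mechanical.
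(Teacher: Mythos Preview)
Your proposal is correct and follows essentially the same approach as the paper: condition on $x$, express $T_c$ as a sum of $N-k$ transition times that are conditionally independent exponentials with rates $\lambda\cdot D(i|x)$, take the product of their individual MGFs, and then uncondition via $P_{sdn}(x)$. The paper's write-up differs only cosmetically in that it starts from $E\bigl[\prod_i e^{\theta T_{i,i+1}}\bigr]$ and pushes the conditioning and the product-expectation swap through a chain of displayed equalities, whereas you first compute the conditional MGF and then sum; the mathematical content and the key justifications (conditional independence given $x$, exponential MGF formula, law of total expectation) are identical.
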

\begin{proof}
The proof is given in Appendix~\ref{sec:proof-of-thm-MGF-Tc}.
\end{proof}

Using the above lemma, and applying the property
\begin{equation}\label{eq:MGF-moments}
E[X^{n}] = \left. \frac{d^{n}M_{X}(\theta)}{(d\theta)^{n}}\right|_{\theta=0}
\end{equation}
we can calculate the moments of $T_{c}$. The following theorem gives the mean value (first moment) 
 of $T_{c}$ as a function of $D(i|x)$ (Lemma~\ref{thm:Dix-full-mesh}) and $P_{sdn}(x)$ (Lemma~\ref{thm:P-sdn}), or, equivalently, as a function of the parameters $N$, $k$, and $\lambda$. 
\begin{theorem}\label{thm:expectation-and-variance-Tc}
The expectation of the BGP convergence time $T_{c}$ is
\begin{equation}
E[T_{c}] = \frac{1}{\lambda}\cdot\sum_{x=0}^{N-k}\sum_{i=1}^{N-k}\frac{1}{ D(i|x)}\cdot P_{sdn}(x)
\end{equation}
\end{theorem}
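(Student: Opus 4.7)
The plan is to differentiate the moment generating function in Lemma~\ref{thm:MGF-Tc} at $\theta = 0$, using the identity $E[T_c] = \left. \frac{d M_{T_c}(\theta)}{d\theta}\right|_{\theta = 0}$ from \eq{eq:MGF-moments}. Since $M_{T_c}(\theta)$ is written as a finite sum over $x$ of products weighted by $P_{sdn}(x)$, I can push the derivative inside the sum and handle each term independently.

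For each fixed $x$, let me denote $g_x(\theta) = \prod_{i=1}^{N-k}\bigl(1 - \theta/(\lambda \, D(i|x))\bigr)^{-1}$. The cleanest route is logarithmic differentiation: taking the log yields $\ln g_x(\theta) = -\sum_{i=1}^{N-k} \ln\bigl(1 - \theta/(\lambda\, D(i|x))\bigr)$, so that $g_x'(\theta)/g_x(\theta) = \sum_{i=1}^{N-k} \frac{1/(\lambda\, D(i|x))}{1 - \theta/(\lambda\, D(i|x))}$. Evaluating at $\theta = 0$ gives $g_x(0) = 1$ and hence $g_x'(0) = \sum_{i=1}^{N-k} \frac{1}{\lambda\, D(i|x)}$. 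Summing over $x$ weighted by $P_{sdn}(x)$ produces the claimed formula, after factoring the $1/\lambda$ outside.

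As a sanity check, the same result falls out of a direct probabilistic argument: conditional on the SDN cluster being informed at step $x$, the sojourn time at state $i$ of the Markov chain of Fig.~\ref{fig:mc-steps} is exponential with rate $\lambda\, D(i|x)$ by \eq{eq:transition-rate-lambda-prime}, and these sojourn times are independent. Thus $E[T_c \mid x] = \sum_{i=1}^{N-k} \frac{1}{\lambda\, D(i|x)}$, and the law of total expectation, $E[T_c] = \sum_{x=0}^{N-k} E[T_c \mid x]\, P_{sdn}(x)$, yields the theorem. This cross-check is worth including because it also confirms the MGF derivation uses the right convention for the upper summation limits.

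There is no real obstacle here; the only mild care needed is to justify exchanging differentiation with the finite sum (trivial, as both sums are finite) and to verify that $g_x(\theta)$ is smooth at $\theta = 0$ (true since every $\lambda\, D(i|x) > 0$ for $i \in [1, N-k]$ and $x \in [0, N-k]$, so the singularities of each factor lie strictly away from the origin). With those remarks, the proof reduces to the one-line calculation of $g_x'(0)$.
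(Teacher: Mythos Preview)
Your proposal is correct and follows exactly the route the paper indicates: apply \eq{eq:MGF-moments} with $n=1$ to the MGF of Lemma~\ref{thm:MGF-Tc}, differentiating the finite sum term-by-term. The paper does not spell out the logarithmic-differentiation step or the direct conditional-expectation cross-check you include, but these are just the natural details behind the one-line derivation the paper leaves implicit.
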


The methodology in the proof of Lemma~\ref{thm:MGF-Tc} can be applied to derive useful expressions for other quantities that are of practical interest, and allow us to obtain a better understanding of the effects of routing centralization on control-plane convergence. For example, the following corollary quantifies the speed of the control-plane convergence process.


\begin{corollary}\label{thm:l-partial-Tc}
The expectation of the $\ell$-Partial BGP Convergence Time, $T_{\ell}$, i.e., the time needed till $\ell$ ($\ell\leq N$) nodes have the final BGP updates,is given by
\begin{equation}
E[T_{\ell}] = \frac{1}{\lambda}\cdot \sum_{x=0}^{N-k} \sum_{i=1}^{M(\ell,x)}\frac{1}{D(i|x)}\cdot P_{sdn}(x)
\end{equation}
where
\begin{equation}
M(\ell,x) = \left\{
\begin{tabular}{ll}
$\ell-1$		&		, $~~~~~0<\ell\leq x+1$	\\
$x$			&		, $x+1<\ell\leq x+k$	\\
$\ell-k$		&		, $x+k<\ell\leq N$	\\
\end{tabular}
\right.
\end{equation}
\end{corollary}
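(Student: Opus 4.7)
The proof parallels that of Theorem~\ref{thm:expectation-and-variance-Tc} (and Lemma~\ref{thm:MGF-Tc}) and exploits the same Markov-chain representation of BGP-update dissemination: the $\ell$-partial convergence time is the hitting time of the first state in which at least $\ell$ nodes have the updated RIB. My plan is to (i) identify, conditionally on the state/step $x$ at which the SDN cluster becomes informed, how many transitions $M(\ell,x)$ of the chain in Fig.~\ref{fig:mc-steps} are needed to reach that state; (ii) write $T_\ell\mid x$ as a sum of $M(\ell,x)$ independent exponential holding times; and (iii) take expectations and uncondition on $x$ using $P_{sdn}(x)$ from Lemma~\ref{thm:P-sdn}.

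For step (i), I would track the number of updated nodes as a function of the state index $i$ using $n(i|x)$ from Eq.~\eqref{eq:n(i)}. Each transition of the chain brings exactly one new updated node, except for the single transition that incorporates the SDN cluster, which brings in $k$ nodes at once. This observation yields the three piecewise regimes in the statement: if $\ell$ is reached before the SDN-cluster transition, $M(\ell,x)$ counts only single-node transitions and equals $\ell-1$; if $\ell$ falls within the block of $k$ nodes contributed by the SDN transition itself, then reaching state $x$ suffices, so $M(\ell,x)=x$; and if $\ell$ is reached only through post-SDN single-node transitions, then $M(\ell,x)=\ell-k$, because the SDN transition accounts for $k$ of the $\ell$ nodes in one step. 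The thresholds $\ell=x+1$ and $\ell=x+k$ separating the three regimes follow directly from equating $1+n(i|x)$ to $\ell$ in each piece of Eq.~\eqref{eq:n(i)}.

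Steps (ii) and (iii) are then direct. By Assumption~\ref{assumption:t-bgp-poisson} together with Eq.~\eqref{eq:transition-rate-lambda-prime}, the holding time in state $i$ of the chain (conditional on $x$) is exponential with rate $\lambda\cdot D(i|x)$, independent across states. Therefore $T_\ell\mid x = \sum_{i=1}^{M(\ell,x)} X_i$ with $X_i\sim\mathrm{Exp}\bigl(\lambda D(i|x)\bigr)$, and linearity of expectation gives $E[T_\ell\mid x] = \tfrac{1}{\lambda}\sum_{i=1}^{M(\ell,x)}\tfrac{1}{D(i|x)}$. Applying the law of total expectation over the possible values of $x$, weighted by $P_{sdn}(x)$, yields exactly the claimed closed-form expression.

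The main obstacle is step (i): one has to be careful with the indexing conventions and boundary cases to confirm that the three regimes of $M(\ell,x)$ partition $\{1,\ldots,N\}$ without gap or overlap, and in particular that the expression reduces to Theorem~\ref{thm:expectation-and-variance-Tc} when $\ell=N$ (a useful sanity check: all three cases must then yield $M=N-k$). Once $M(\ell,x)$ is pinned down, steps (ii) and (iii) are immediate—either by linearity of expectation as above, or, if higher moments are desired, by truncating the product in the MGF argument of Lemma~\ref{thm:MGF-Tc} at index $M(\ell,x)$ and differentiating at $\theta=0$.
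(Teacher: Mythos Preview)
Your proposal is correct and follows exactly the route the paper indicates: the corollary is presented as an immediate consequence of the Markov-chain methodology behind Lemma~\ref{thm:MGF-Tc} and Theorem~\ref{thm:expectation-and-variance-Tc}, and you fill in precisely those details by truncating the sum of exponential holding times at the step $M(\ell,x)$ where $n(\cdot|x)$ first reaches $\ell$, then unconditioning via $P_{sdn}(x)$. The only cosmetic slip is the phrase ``reaching state $x$ suffices'' in the middle regime---with the paper's indexing, $M(\ell,x)=x$ transitions bring you to state $x{+}1$ (where the SDN block has just arrived), not state $x$---but your formulas, case boundaries, and the sanity check $M(N,x)=N{-}k$ are all correct.
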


%

\subsection{Analysis: Sparse Topologies}\label{sec:poisson}

As mentioned earlier, computing the control-plane convergence for an arbitrary topology is very complex. For instance, applying the methodology of Section~\ref{sec:full-mesh}, the set of bgp-eligible nodes at a step $i$ depends on the exact paths $\mathcal{P}$ that the BGP updates have been propagated. Hence, we need to consider all $S\in\mathcal{P}$ (with $|\mathcal{P}|\sim O\left(2^{N}\right)$), and we need to keep track of all $D(i|x,S\in\mathcal{P})$ and $P_{sdn}(x|S\in\mathcal{P})$. However, approximating sparse topologies with a Poisson (or, Erdos-Renyi) random graph $G(N,p)$, we derive expressions for the BGP convergence time in the following result. As we show in the validation Section~\ref{sec:control-plane-validation}, our approximations describe well effects of routing centralization also in more generic/realistic topologies, like power-law graphs or the Internet AS-graph.

\begin{result}\label{thm:Dix-poisson} Lemma~\ref{thm:MGF-Tc}, Theorem~\ref{thm:expectation-and-variance-Tc}, and Corollary~\ref{thm:l-partial-Tc}, with $E[\Dix]$ (instead of $D(i|x)$), approximate the control-plane convergence time in a Poisson graph network topology; where $E[\Dix]$ is the expectation of the \bgp $\Dix$, $i\in[1,N-k], x\in[0,N-k]$, in a Poisson graph 
\begin{equation}
E[\Dix] = \left(N-n(i|x)\right) \cdot \left(1-(1-p)^{n(i|x)}\right)
\end{equation}
\end{result}
\begin{proof}
The proof is given in Appendix~\ref{sec:proof-of-thm-Dix-poisson}.
\end{proof}

\subsection{Simulation Results and Implications}\label{sec:control-plane-validation}
We evaluate the accuracy of our theoretical results in various simulation scenarios, including also scenarios where the assumptions for (i) exponential $f_{bgp}(t)$, and (ii) full-mesh or Poisson graph networks, do not hold. 

In scenarios of full-mesh networks, where the times $T_{bgp}$ are exponentially distributed, our theoretical expressions of Section~\ref{sec:full-mesh} predict the simulation results for the expected convergence time $E[T_{c}]$ with very high accuracy. 

For the validation of the theoretical expressions in sparse networks (Section~\ref{sec:poisson}), we simulate various sparse topologies, like Poisson, Barabasi-Albert (power low), and Newman-Watts-Strogatz (small world) graphs. Although the theoretical results are derived under the Poisson graph assumption, our simulations show that they can predict the performance with similar accuracy in the all the topologies we tested. 


In Fig.~\ref{fig:Tell-vs-k} we present a representative subset of our results that demonstrate how the routing centralization can decrease the BGP convergence time. We plot the partial convergence time, normalized over the scenario without centralization, i.e. ,$\frac{E[T_{\ell}|k]}{E[T_{\ell}|k=0]}$. We consider three cases, $\ell=100$ (or, $0.1\cdot N$) in Fig.~\ref{fig:ell-100}, $\ell=500$ (or, $0.5\cdot N$) in Fig.~\ref{fig:ell-500}, and $\ell=N=1000$ that corresponds to the control-plane convergence in Fig.~\ref{fig:ell-N}. 

A first observation is that our results can capture well the relative changes\footnote{The accuracy of the theoretical results (approximations), when we consider the \textit{actual} -not normalized- values, is lower.} in the (partial) convergence time, not only for scenarios with exponential $f_{bgp}(t)$ (as we assume in our analysis), but also for scenarios with uniform $f_{bgp}(t)$.

In Fig.~\ref{fig:ell-N}, we can see that the control-plane convergence time does not significantly improve as the SDN cluster size $k$ increases. For instance, even for $k=500$ (i.e., $50\%$ of the nodes belong to the SDN cluster), the decrease in the convergence time is less than $30\%$. This comes to verify the results of~\cite{Kotronis-Routing-Centralization-ComNets-2015}, which showed that significant gains can be achieved only for high values ($>50\%$) of SDN penetration.

However, when it comes to the partial control-plane convergence (Figs.~\ref{fig:ell-100} and~\ref{fig:ell-500}), the effects of routing centralization are higher. The time needed till $10\%$  of the nodes ($\ell=100$ - Fig.~\ref{fig:ell-100}) to receive the updated routing information, decreases quickly; e.g., to $0.5$ of its no-SDN ($k=0$) value, only with $k=100$ nodes ($10\%$) participating in the SDN cluster. 

This reveals an important aspect, relating to the effects of routing centralization, which has not been shown in previous works (e.g.,~\cite{Kotronis-Routing-Centralization-ComNets-2015}): although the control-plane convergence can significantly improve only if a high percentage ($>50\%$) of nodes cooperate, we can have very large gains in the \textit{partial convergence} even with small sizes of SDN clusters.

In Fig.~\ref{fig:ell-caida-betweenness} we present simulation results on the Internet AS-graph\footnote{For scalability issues, we did not consider here stub ASes and ASes with less than 3 neighbors, resulting in a reduced Internet graph with $N=11527$.}, where the top betweenness centrality nodes form the SDN cluster. Despite the fact that the simulated scenario deviates from our assumptions, our main theoretical findings are still valid: centralization can significantly accelerate the connectivity time with a large percentage of ASes (i.e., $\ell$-partial convergence, see, e.g., curves for $\ell = 0.1\cdot N$ and $\ell = 0.5\cdot N$), while the time needed till every AS has received the updated routes (i.e., total convergence $E[T_{c}]$) improves more slowly with the SDN cluster size $k$. Moreover, we can see that the efficiency of inter-domain centralization is quite impressing; with only $k=50$ central nodes in the SDN cluster, the time needed to establish updated paths with half of the Internet nodes ($\ell=0.5\cdot N$) is $50\%$ less than in the case without centralization.

\begin{figure}
\centering
\subfigure[$\ell = 100$]{\includegraphics[width=0.9\linewidth]{./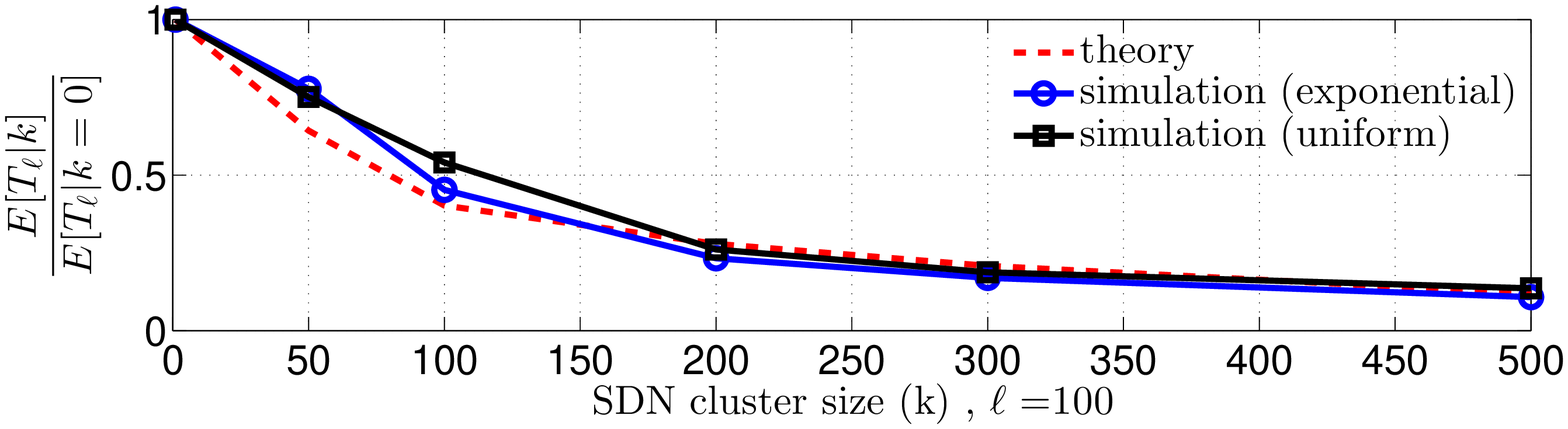}\label{fig:ell-100}}
\subfigure[$\ell = 500$]{\includegraphics[width=0.9\linewidth]{./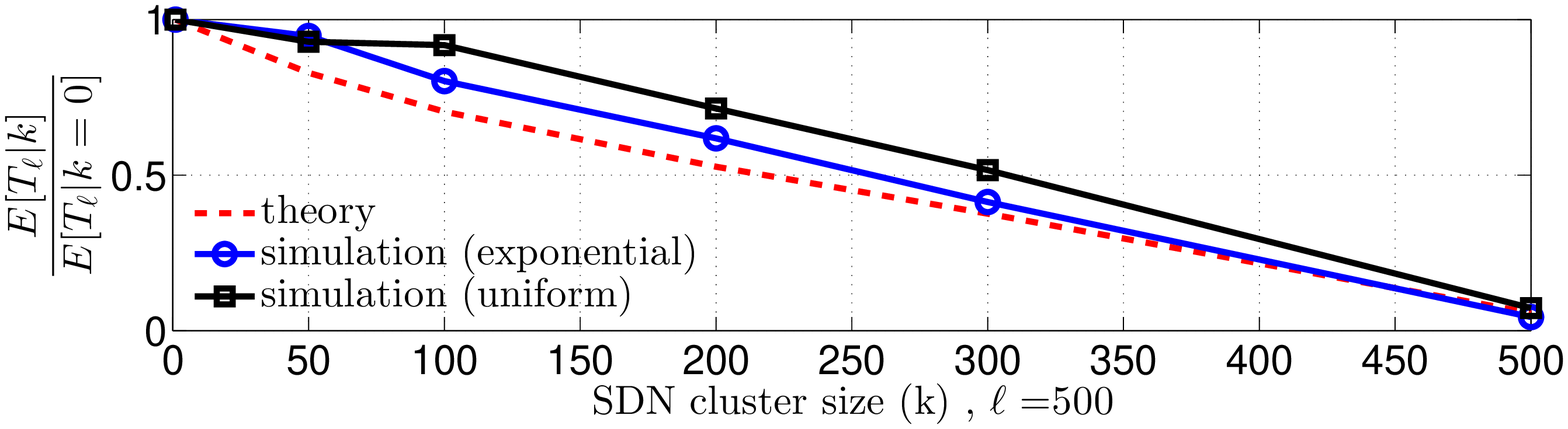}\label{fig:ell-500}}
\subfigure[$\ell = 1000 (=N)$]{\includegraphics[width=0.9\linewidth]{./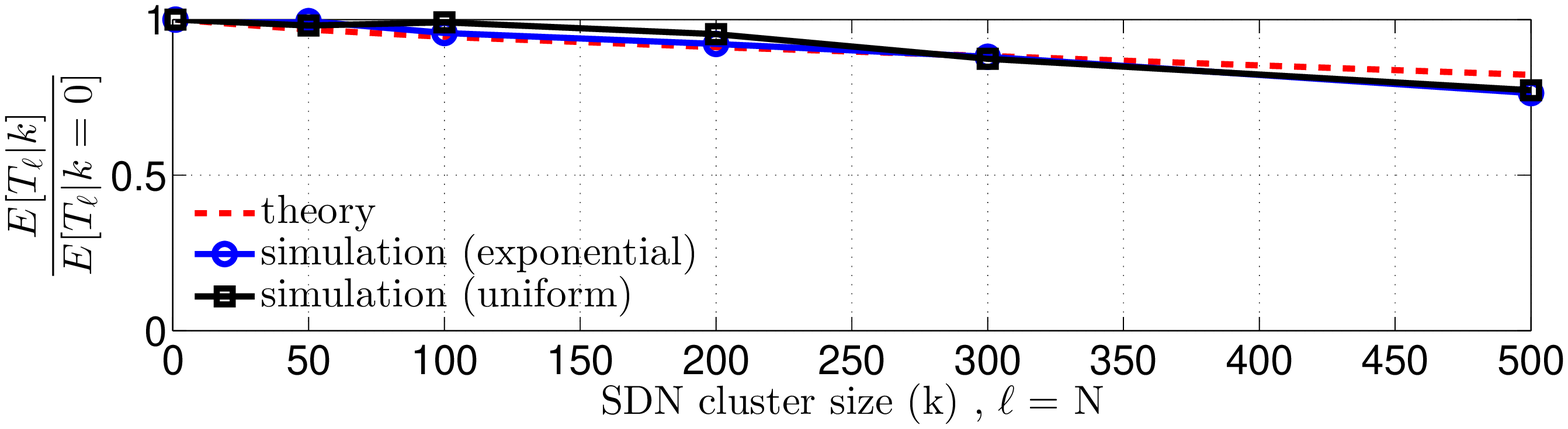}\label{fig:ell-N}}
\caption{Partial convergence time, normalized over the no SDN scenario, $\frac{E[T_{\ell}|k]}{E[T_{\ell}|k=0]}$ (y-axis), vs. size of SDN cluster $k$ (x-axis). Simulation scenarios: Barabasi-Albert topology with $N=1000$ and average node degree $10$; $T_{bgp}\sim exponential(\lambda=1)$ (black line - squares) and $T_{bgp}\sim uniform(0,2)$ (blue line - circles).}
\label{fig:Tell-vs-k}
\end{figure}

\begin{figure}
\centering
\includegraphics[width=\linewidth]{./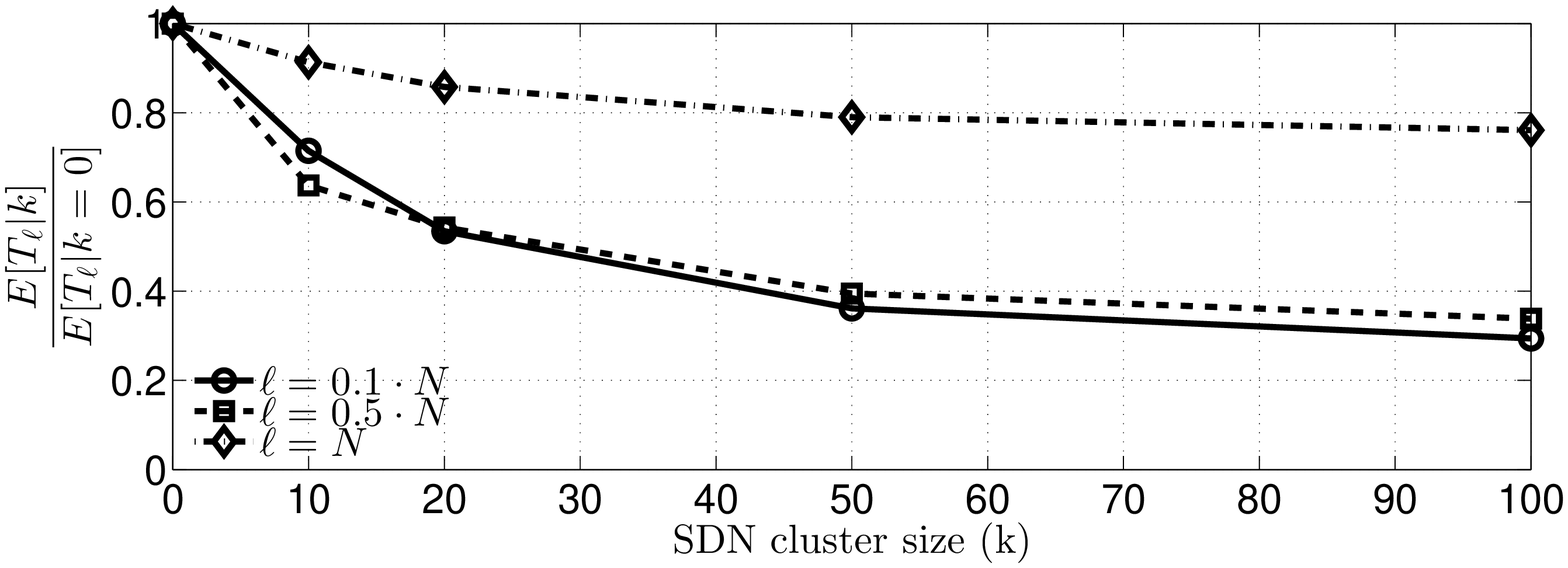}
\caption{Partial convergence time, normalized over the no SDN scenario, $\frac{E[T_{\ell}|k]}{E[T_{\ell}|k=0]}$ (y-axis), vs. size of SDN cluster $k$ (x-axis). Simulation scenarios on the Internet AS-graph. Nodes in the SDN cluster are selected with decreasing \textit{betweenness centrality}.}
\label{fig:ell-caida-betweenness}
\end{figure}

%

\section{Related Work}
Inter-domain SDN is a new research area that attracts increasing attention~\cite{Gupta-SDX-CCR-2014, Kotronis-CXP-SOSR-2016,Lin-Seamless-Internetworking-Demo-Sigcomm-2013,Thai-Decoupling-BGP-Conext-2012,Rothenberg-Revisiting-RCP-HotSDN-2012,Bennesby-Innovating-IDrouting-AINA-2014,Kotronis-Routing-Centralization-ComNets-2015}. In~\cite{Gupta-SDX-CCR-2014} authors propose and implement SDX, a software-defined component for IXPs, which increases the capabilities on routing control. Another IXP-based system that enables novel services for establishing QoS route paths is described in~\cite{Kotronis-CXP-SOSR-2016}. In~\cite{Lin-Seamless-Internetworking-Demo-Sigcomm-2013} a solution for incremental deployment of inter-domain SDN, which is seamless to traditional IP networks, is proposed, and~\cite{Thai-Decoupling-BGP-Conext-2012} contributes in this direction by proposing an SDN-based methodology for decoupling BGP policy control from routing. \cite{Rothenberg-Revisiting-RCP-HotSDN-2012} proposes an SDN-based architecture to enhance inter-domain routing, and~\cite{Bennesby-Innovating-IDrouting-AINA-2014} proposes an component to enable inter-domain SDN. Finally, authors in~\cite{Kotronis-Routing-Centralization-ComNets-2015} build a realistic emulator, and use it to investigate the effects of routing centralization on BGP convergence time.


The slow convergence of BGP has been extensively studied through measurements in~\cite{Labozitz-Delayed-convergence-CCR-2000,Kushman-Can-Hear-CCR-2007,Oliveira-Quantifying-Path-Exploration-ToN-2009}. It has been shown that BGP can take several minutes to converge after a routing change, and this can cause severe packet losses~\cite{Labozitz-Delayed-convergence-CCR-2000} and performance degradation~\cite{Kushman-Can-Hear-CCR-2007}. 

Finally, analytic approaches for the BGP convergence can be found in~\cite{convergence-properties-BGP_ComNets-2011,Labozitz-Delayed-convergence-CCR-2000,stability-inter-domain-Infocom-2009}. In~\cite{convergence-properties-BGP_ComNets-2011}, a probabilistic model and automata theory is used to study the BGP convergence (probability of convergence, and convergence time). \cite{Labozitz-Delayed-convergence-CCR-2000} studies analytically the BGP convergence with respect to the number of exchanged messages, while~\cite{stability-inter-domain-Infocom-2009} performs a worst-case analysis of BGP convergence




%
%

\section{Conclusion}\label{sec:conclusion}
In this paper, we analytically studied the effects of inter-domain SDN on the time needed for establishing connectivity and convergence after a routing change. We proposed a probabilistic model, and derived results for the expected data-plane connectivity time (lower/upper bounds) and control-plane convergence time (exact predictions and approximations).

Our results can be used to quickly evaluate the effects of different network parameters, like network size, topology, path lengths, or number of SDN nodes, on the routing performance. Hence, they can complement previous system-oriented studies and facilitate future research. Moreover, our methodology and results can be a useful tool for studying important problems relating to routing changes in the Internet. Finally, they can be applied in practical design problems, like selecting the nodes to participate in the SDN cluster based on performance criteria (i.e., which node can have the highest impact), or for network economics purposes (e.g., detecting the potential incentives for an AS to participate in inter-domain routing centralization).

%
%
%


\bibliographystyle{ieeetr}


\appendices
\section{Distribution of BGP update times $T_{bgp}$}\label{sec:distr-t-bgp}
To investigate if and how well our modeling assumptions can describe the BGP update times in the Internet, we compare them against real measurement data. 

We conducted experiments in the Internet using the PEERING testbed~\cite{Schlinker-PEERING-HotNets-2014}, which owns IP prefixes and ASNs, peers with networks in different locations around the world, and allows users to make real BGP announcements. In our experiments/measurements, we follow a similar methodology as in~\cite{ARTEMIS-Demo-Sigcomm-2016}: we (i) announce a /24 prefix from a site of the PEERING testbed, and (ii) use publicly available control-plane monitoring services (route collectors and looking glass servers)~\cite{bgpmon, ripe-ris,periscope} to measure the time needed till different ASes receive our announcements.

We collected BGP updates, as seen from the monitors, from $M=40$ ASes. We repeated the experiments $14$ times; each time making a BGP announcement either from the PEERING site at an IXP at Amsterdam (NL), or at an ISP at Los Angeles (US). From each received BGP update $i$, we consider (a) $T_{SD}(i)$, the time needed till the BGP update $i$ received by the monitor (i.e., timestamp of the BGP update $i$ minus the timestamp of our BGP announcement), and (b) $d(i)$, the length of the AS-path included in the BGP update $i$. 

We group the times $T_{SD}(i)$ by the respective path lengths $d(i)$, and plot the distribution (CCDF) of the measured times $T_{SD}$ in Fig.~\ref{fig:measurements-poisson-assumption} for two example cases with $d=2$ and $d=5$.

Then, we fit the real data with a distribution $f_{bgp}(t)$ (cf. Section~\ref{sec:model}), where we select $f_{bgp}(t)\sim exponential(\lambda)$ in order to test the validity of (the stronger) Assumption~\ref{assumption:t-bgp-poisson}. We estimate the \textit{average} BGP update time from the measured data as $\hat{E}[T_{bgp}] = \frac{\sum_{i}T_{SD}(i)}{\sum_{i} d(i)}$ and set the rate $\lambda = \frac{1}{\hat{E}[T_{bgp}]}$. 

We generate from $f_{bgp}(t)$ a large number of times $T_{SD}$ for paths of length $d=2$ and $d=5$, calculate their CCDFs, and compare them against the real data in Fig.~\ref{fig:measurements-poisson-assumption}. As we can observe, there is a good match between the generated and real data. This indicates that Assumption~\ref{assumption:t-bgp-poisson} is a realistic and reasonable assumption, and, thus, emphasizes the practicality of our theoretical and simulation findings in real settings



\begin{figure}
\centering
\subfigure[$d = 2$]{\includegraphics[width=0.49\linewidth]{./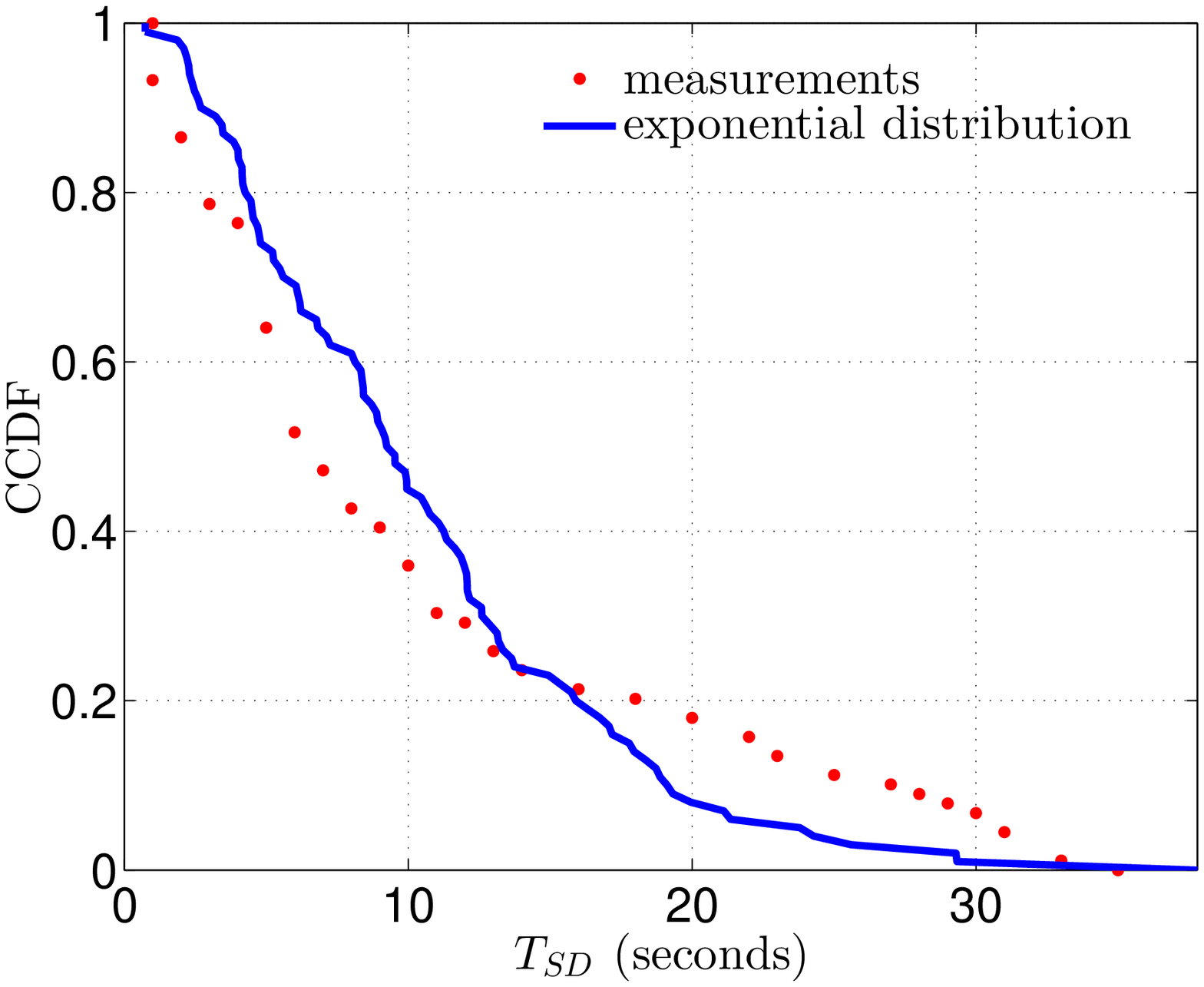}\label{fig:measurements-poisson-assumption-path-length-2}}
\subfigure[$d = 5$]{\includegraphics[width=0.49\linewidth]{./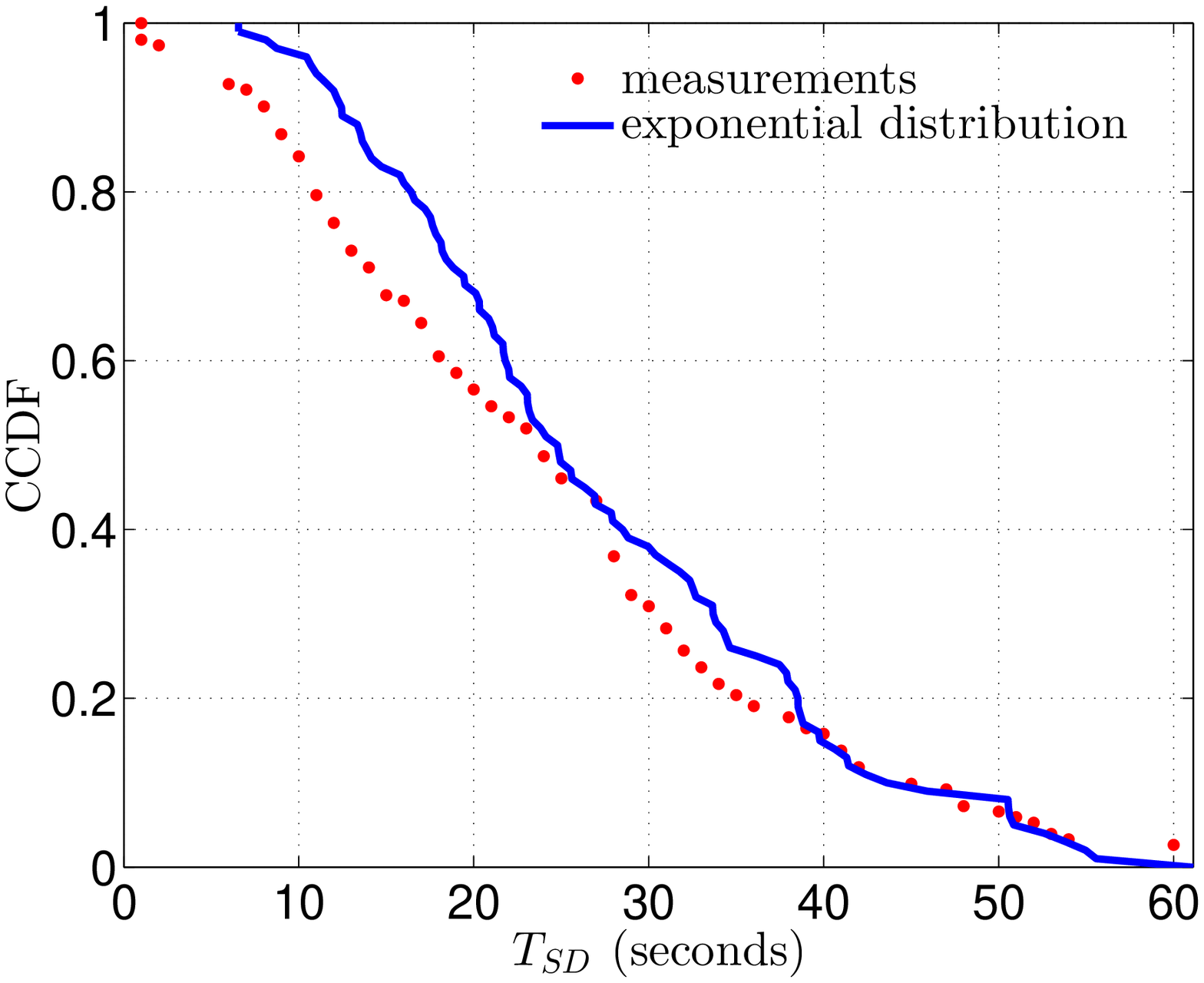}\label{fig:measurements-poisson-assumption-path-length-5}}
\caption{CCDF of the times $T_{SD}$ for SD-paths of length (a) $d=2$ and (b) $d=5$. Comparison of times $T_{SD}$ from \textit{measurements} in the Internet (where we found $E[T_{bgp}] = 6.27$), and times $T_{SD}$ generated from our model with $f_{bgp}(t)\sim$ \textit{exponential distribution}$\left(\lambda=\frac{1}{E[T_{bgp}]}\right)$.}
\label{fig:measurements-poisson-assumption}
\end{figure}


\section{Proof of Theorem~\ref{thm:sd-path-d-k}}\label{sec:proof-of-thm-sd-path-d-k}

\begin{proof}
Let us assume a SD-path of length $d$ and denote the ASes/nodes in the path as $n_{0}, n_{1}, ..., n_{d}$, where $n_{0}\equiv S$ and $n_{d}\equiv D$. The total number of ASes on the SD-path is $d+1$  (including nodes S and D). Let us denote as $k^{'}$, $0\leq k^{'} \leq d+1$ the number of these nodes that belong also to the SDN cluster. 

If none of the nodes comprising the SD-path belong to the SDN cluster (i.e., $k^{'} = 0$), the BGP updates propagate from $n_{0}\equiv S$ to $n_{1}$, then from $n_{1}$ to $n_{2}$, etc., till they reach the destination node $n_{d}\equiv D$. Therefore, the time $T_{SD}$ is equal to
\begin{equation}\label{eq:Tsd-sum-d-Ti_i+1}
T_{SD} = T_{n_{0},n_{1}} + T_{n_{1},n_{2}} + ... + T_{n_{d-1},n_{d}} = \sum_{i = 0}^{d-1}T_{n_{i},n_{i+1}}
\end{equation}
and since the times $T_{n_{i},n_{i+1}}$ are iid random variables, i.e., $T_{n_{i},n_{i+1}}\sim f_{bgp}(t)$, (Assumption~\ref{assumption:t-bgp}), the expectation of $T_{SD}$ is
\begin{equation}\label{eq:Tsd-sum-d-Tbgp}
E[T_{SD}|d,k^{'}=0] = \sum_{i = 0}^{d-1}E\left[T_{n_{i},n_{i+1}}\right] = d\cdot E[T_{bgp}]
\end{equation}

Now assume that the node $n_{j}$, $j=1,...,d$, is the only node on the SD-path that belongs in the SDN cluster (i.e., $k^{'}=1$). Let $T_{1} = \sum_{i=1}^{j-1}T_{n_{i},n_{i+1}}$ be the time needed for the update to propagate from $n_{0}\equiv S$ to $n_{j}$, and $T_{2} = \sum_{i=j}^{d-1}T_{n_{i},n_{i+1}}$ the time needed for the update to propagate from $n_{j}$ to the destination $n_{d} \equiv D$.


The node $n_{j}$ is first informed about the BGP update at time $T^{'}\leq T_{1}$: either from the previous node in the path ($T^{'}=T_{1}$), or at an earlier time ($T^{'}<T_{1}$) from the SDN cluster, if the SDN cluster has received (through another path) the BGP update earlier.

Therefore, the total time needed for all the nodes in the SD-path to receive the BGP update can be expressed as
\begin{equation}\label{eq:Tsd-definition}
T_{SD} = max\{T_{1} , T^{'}+T_{2}\}
\end{equation}

\vspace{\baselineskip}
\noindent\underline{Lower Bound:}

To derive the lower bound of the expectation of $T_{SD}$, we take the expectations on \eq{eq:Tsd-definition} and proceed as follows.
\begin{align}
E[T_{SD}|d,k^{'}=1] 
&= E\left[max\{T_{1} , T^{'}+T_{2}\}\right] \\
&\geq E\left[max\{T_{1} , T_{2}\}\right] \label{eq:Tprime-0}\\
&\geq max\left\{E[T_{1}] , E[T_{2}]\right\} \label{eq:Expectation-vs-Max}\\
&= max\left\{E[T_{bgp}]\cdot d_{1} , E[T_{bgp}]\cdot d_{2}\right\} \label{eq:Expectations-times-ETbgp}\\
&= E[T_{bgp}]\cdot max\left\{ d_{1} ,  d_{2}\right\} \\
&\geq E[T_{bgp}]\cdot \displaystyle\min_{d_{1},d_{2}}\left\{max\left\{ d_{1} ,  d_{2}\right\}\right\} \label{eq:min-d1-d2}
\end{align}
which gives
\begin{equation}\label{eq:d1-d2-dover2}
E[T_{SD}|d,k^{'}=1] ~~\geq~~ \left\{
\begin{tabular}{lc}
$\displaystyle 0$				&, $d=1$\\
$\displaystyle E[T_{bgp}]\cdot \frac{d}{2}$	&, $d>1$
\end{tabular}
\right.
\end{equation}
where 
\begin{itemize}
\item \eq{eq:Tprime-0} follows since $T^{'}\geq 0$ ($T^{'}=0$ denotes the event that the SDN cluster receives the BGP update immediately after the routing change takes place).

\item The inequality of \eq{eq:Expectation-vs-Max} follows since the times $T_{1}$ and $T_{2}$ are independent random variables, and thus it holds
\begin{align*}
P\{max\{T_{1},T_{2}\}\leq t\} 	&= P\{T_{1}\leq t\}\cdot P\{T_{2}\leq t\}~~~\Rightarrow\\
P\{max\{T_{1},T_{2}\}\leq t\} 	&\leq P\{T_{i}\leq t\}~~~\Rightarrow\\
P\{max\{T_{1},T_{2}\}> t\} 		&\geq P\{T_{i}> t\},~~~~\forall i=\{1,2\}
\end{align*}
and for a positive r.v. $X$ it also holds that $E[X]=\int_{0}^{\infty}P\{X>x\}dx$, and thus taking the integral in the above inequality it follows
\begin{align*}
\int_{0}^{\infty}P\{max\{T_{1},T_{2}\}> t\}dt 	&\geq \int_{0}^{\infty}P\{T_{i}> t\}dt~\Rightarrow\\
E[max\{T_{1},T_{2}\}] 						&\geq E[T_{i}],~~~~\forall i=\{1,2\}
\end{align*}
or, equivalently, $E[max\{T_{1},T_{2}\}] \geq max\left\{E[T_{i}]\right\}$.

\item The expectations $E[T_{i}], ~i=\{1,2\}$ are substituted in \eq{eq:Expectations-times-ETbgp} with $E[T_{bgp}]\cdot d_{i}$ since $T_{i}$ is the sum of $d_{i}$ iid r.v. with expected value $E[T_{bgp}]$.

\item In \eq{eq:min-d1-d2} we consider all the possible combinations of $d_{1}$ and $d_{2}$ (under the condition $d_{1}+d_{2}=d$), whose max value is minimized when $d_{1}=d_{2}=\frac{d}{2}$ (\eq{eq:d1-d2-dover2}).
\end{itemize}

Now, if there are $k^{'}$ nodes in the SD-path that belong to the SDN cluster, proceeding similarly to the above case $k^{'}=1$ leads to the following generic inequality
\begin{equation}
E[T_{SD}|d,k^{'}] ~~\geq~~ \left\{
\begin{tabular}{lc}
$ 0$				&, $d\leq k^{'}$\\
$ E[T_{bgp}]\cdot \frac{d}{k^{'}+1}$	&, $d>k^{'}$
\end{tabular}
\right.
\end{equation}
which gives the lower bound of Theorem~\ref{thm:sd-path-d-k}.

\vspace{\baselineskip}
\noindent\underline{Upper Bound:}

For $k^{'}=0$, the expectation of $T_{SD}$ is given by \eq{eq:Tsd-sum-d-Tbgp}. For $k^{'}=1$, since $T^{'}\leq T_{1}$, we can use \eq{eq:Tsd-definition} and write
\begin{align}
E[T_{SD}|d,k^{'}=1] 
&= E\left[max\{T_{1} , T^{'}+T_{2}\}\right] \\
&\leq E\left[max\{T_{1} , T_{1}+T_{2}\}\right] \\
&= d\cdot E[T_{bgp}] \label{eq:upper-bound-k=1}
\end{align}
where the last equality follows from \eq{eq:Tsd-sum-d-Tbgp}.

In the case of $k^{'}>1$, it is probable that, after the SDN cluster is informed about the routing change, the BGP update propagates simultaneously on more than one sections on the SD-path. For example, in Fig.~\ref{fig:sd-path}, after the SDN cluster is informed ($n_{i}$ and $n_{j}$ receive the update at the same time), the BGP update will propagate \textit{simultaneously} in the sub-paths $n_{i}\rightarrow ...\rightarrow n_{j-1}$ and $n_{j}\rightarrow ...\rightarrow n_{d}$. This, accelerates the propagation process, and, thus, decreases the time $T_{SD}$. 

It is easy to see, that the smaller decrease (on average) on $T_{SD}$, will take place when the $k^{'}$ nodes that belong to the SDN cluster are located consecutively on the SD-path. Without loss of generality, let assume that the first $k^{'}$ nodes $n_{0},...,n_{k^{'}-1}$ are the nodes that belong to the SDN cluster, and denote the time $T_{SD}$ for this (worst) case as $T_{SD}^{max}$. Then, the time $T_{SD}^{max}$ is given by
\begin{align}
T_{SD}^{max} 	&= \sum_{i=0}^{k^{'}-2}T_{n_{i},n_{i+1}} + \sum_{i=k^{'}-1}^{d-1}T_{n_{i},n_{i+1}} = \sum_{i=k^{'}-1}^{d-1}T_{n_{i},n_{i+1}}
\end{align}
since $\sum_{i=0}^{k^{'}-2}T_{n_{i},n_{i+1}} = T_{sdn}\equiv 0$. The expectation of $T_{SD}^{max}$ is derived similarly to \eq{eq:Tsd-sum-d-Ti_i+1} and \eq{eq:Tsd-sum-d-Tbgp}, i.e., 
\begin{equation}\label{eq:T-sd-max}
E[T_{SD}^{max}] = E\left[\sum_{i=k^{'}-1}^{d-1}T_{n_{i},n_{i+1}}\right] = \left(d-(k^{'}-1)\right)\cdot E[T_{bgp}]
\end{equation}

Combining \eq{eq:Tsd-sum-d-Tbgp}, \eq{eq:upper-bound-k=1}, and \eq{eq:T-sd-max}, gives the upper bound of Theorem~\ref{thm:sd-path-d-k}.
\end{proof}


\section{Proof of Lemma~\ref{thm:P-sdn}}\label{sec:proof-of-thm-P-sdn}

\begin{proof}
Considering all the cases for which node initiates the routing change, the probability that the source node belongs to the SDN cluster (and thus $x=0$) is
\begin{equation}
P_{sdn}(0)\equiv P_{sdn}(x=0) = \frac{k}{N}
\end{equation}
If the source node does not belong to the SDN cluster, then at step $1$ there are $N-1$ bgp-eligible nodes, of which $k$ belong to the SDN cluster. This gives
\begin{equation}
P_{sdn}(1~|x>0) = \frac{k}{N-1}
\end{equation}
and, consequently,
\begin{align*}
P_{sdn}(1) = P_{sdn}(1~|x>0)\cdot P_{sdn}(x>0) = \textstyle \frac{k}{N-1}\cdot \left(1-\frac{k}{N}\right)
\end{align*}
Proceeding recursively, we derive \eq{eq:P-sdn} that gives the probability $P_{sdn}(x)$.
\end{proof}

\section{Proof of Theorem~\ref{thm:MGF-Tc}}\label{sec:proof-of-thm-MGF-Tc}

\begin{proof}
The convergence time is $T_{c}$ is calculated by the sum of the transition times of the Markov Chain of Fig.~\ref{fig:mc-steps}, i.e., 
\begin{equation}
T_{c} = T_{1,2} + T_{2,3} + ... + T_{N-k,C} = \sum_{i =1}^{N-k} T_{i,i+1}
\end{equation}
where we denote $T_{N-k,N-k+1}\equiv T_{N-k,C}$. Hence, the MGF of $T_{c}$ is expressed as
\begin{align}
M_{T_{c}}(\theta) 
& = E\left[e^{\theta\cdot \sum_{i =1}^{N-k} T_{i,i+1}}\right] \\
& = E\left[\prod_{i =1}^{N-k} e^{\theta\cdot T_{i,i+1}}\right] \label{eq:mgf-expectation-of-product}\\
& = \sum_{x=0}^{N-k} E\left[\prod_{i =1}^{N-k} e^{\theta\cdot T_{i,i+1}} \Big| x\right] \cdot P_{sdn}(x) \label{eq:mgf-conditional-expectation}\\
& = \sum_{x=0}^{N-k} \prod_{i =1}^{N-k} E\left[ e^{\theta\cdot T_{i,i+1}} \Big| x\right] \cdot P_{sdn}(x) \label{eq:mgf-independency}\\
& = \sum_{x=0}^{N-k} \prod_{i =1}^{N-k} \left(1-\frac{\theta}{\lambda \cdot D(i|x)}\right)^{-1} \cdot P_{sdn}(x) \label{eq:mgf-exponential-variable}
\end{align}
where 
\begin{itemize}
\item In \eq{eq:mgf-conditional-expectation} we consider the conditional expectation, given that the SDN cluster receives the update at step~$x$.

\item  \eq{eq:mgf-independency} follows from the fact that the times $T_{i,i+1}$ are independent under a given $x$; due to Assumption~\ref{assumption:t-bgp-poisson}, they depend only on the number of infected nodes, which is determined by the step $i$ and the value of $x$. 

\item We derive \eq{eq:mgf-exponential-variable}, since $T_{i,i+1}$ is an exponential random variable with rate $\lambda_{i,i+1}^{'} = \lambda \cdot D(i|x)$, and the MGF of an exponential r.v. with rate $\mu$ is given by $\left(1-{\theta}/{\mu}\right)^{-1}$. 
\end{itemize}
\end{proof}

\section{Proof of Result~\ref{thm:Dix-poisson}}\label{sec:proof-of-thm-Dix-poisson}

\begin{proof}
To derive the MGF of $T_{c}$ we apply the methodology in the proof of Lemma~\ref{thm:MGF-Tc}; here, we highlight only the key points and differences from the full-mesh case. 
\begin{align}
&M_{T_{c}}(\theta) 
 = E\left[\prod_{i =1}^{N-k} e^{\theta\cdot T_{i,i+1}}\right] \label{eq:mgf-expectation-of-product-poisson}\\
& = \sum_{S\in\mathcal{P}}\sum_{x=0}^{N-k} E\left[\prod_{i =1}^{N-k} e^{\theta\cdot T_{i,i+1}} \Big| x,S\right] \cdot P\{x,S\} \label{eq:mgf-conditional-expectation-poisson}\\
& = \sum_{x=0}^{N-k}\sum_{S\in\mathcal{P}} E\left[\prod_{i =1}^{N-k} e^{\theta\cdot T_{i,i+1}} \Big| x,S\right] \cdot P\{S\}\cdot P_{sdn}(x) \label{eq:mgf-conditional-expectation-independent-Psdn-poisson}\\
& = \sum_{x=0}^{N-k}\sum_{S\in\mathcal{P}} \prod_{i =1}^{N-k} \left(1-\frac{\theta}{\lambda \cdot D(i|x,S)}\right)^{-1} \cdot P\{S\}\cdot P_{sdn}(x) \label{eq:mgf-conditional-expectation-product-poisson}\\
& = \sum_{x=0}^{N-k} E\left[ \prod_{i =1}^{N-k} \left(1-\frac{\theta}{\lambda \cdot D(i|x,S)}\right)^{-1} \right]\cdot P_{sdn}(x) \label{eq:mgf-expectation-over-S-poisson}\\
& \approx \sum_{x=0}^{N-k} \prod_{i =1}^{N-k} \left(1-\frac{\theta}{\lambda \cdot E_{\mathcal{P}}[D(i|x)]}\right)^{-1} \cdot P_{sdn}(x) \label{eq:mgf-conditional-expectation-product-average-D-poisson}
\end{align}
where
\begin{itemize}
\item After expressing the MGF in \eq{eq:mgf-expectation-of-product-poisson}, we apply the conditional expectation property to write \eq{eq:mgf-conditional-expectation-poisson}, where $x$ is the step that the SDN cluster received the BGP update, $S$ is the set of nodes that have the BGP update, and with $P\{x,S\}$ we denote the respective joint probability.
\item Since we assume the SDN cluster to be formed independently of the topology, it holds (for any topology) that the variables $x$ and $S$ are independent. Hence, $P\{x,S\} = P\{x\}\cdot P\{S\}$, where $P\{x\}\equiv P_{sdn}(x)$ and its value is given by Theorem~\ref{thm:P-sdn}. Also, we can reorder the summations over $x$ and $S$, which gives \eq{eq:mgf-conditional-expectation-independent-Psdn-poisson}.
\item \eq{eq:mgf-conditional-expectation-product-poisson} follows by making similar arguments as in the proof of Lemma~\ref{thm:MGF-Tc}, and can be written as \eq{eq:mgf-expectation-over-S-poisson}, where the expectation is taken over the set $S\in\mathcal{P}$. 
\item Since the expectation in \eq{eq:mgf-expectation-over-S-poisson} is difficult to compute (see above discussion), we approximate it with the \textit{Delta method}~\cite{Oehlert1992}. In the Delta method the expectation of a function (i.e., the product in \eq{eq:mgf-expectation-over-S-poisson}) of a random variable (i.e., $D(i|x,S)$) is approximated by the function of the expectation of the random variable (i.e., $E_{\mathcal{P}}[D(i|x)]$).
\end{itemize}

From \eq{eq:mgf-conditional-expectation-product-average-D-poisson}, it can be seen that the approximation of $M_{T_{c}}(\theta)$ is given by an expression as in Lemma~\ref{thm:MGF-Tc}, where $D(i|x)$ is replaced by $E_{\mathcal{P}}[D(i|x)]$. Moreover, it is easy to see that all the consequent results for the full-mesh network can be similarly modified for the Poisson graph case.

Now, we need only to calculate the expected bgp-degree $E_{\mathcal{P}}[D(i|x)]$: Let assume that we are at step $i$, and $n(i)$ nodes (see \eq{eq:n(i)}) have received the BGP updates; we denote the set of these nodes as $S_{i}$. A node $m\notin S_{i}$ is connected with a node $j\in S_{i}$ with probability $P(m,j)=p$ (by the definition of a Poisson graph). Hence, the probability that $m$ is a bgp-eligible node (i.e., is connected with \textit{any} of the nodes $j\in S_{i}$, where $|S_{i}| = n(i)$), is given by
\begin{align}
P(m,S_{i}) = 1- (1-p)^{n(i)}
\end{align}

Finally, we note that there are $N-n(i)$ nodes without the update, with each of them being a bgp-eligible node with any of the nodes $j\in S_{i}$ with (equal) probability $P(m,S_{i})$. As a result, the total number of bgp-eligible nodes (or, as defined in Def.~\ref{def:bgp-degree}, the \textit{bgp-degree} $D(i)$) is a binomially distributed random variable, whose expectation is given by 
\begin{equation}
E[D(i)] = (N-n(i))\cdot (1-(1-p)^{n(i)})
\end{equation}
\end{proof}

\section{Internet Topology and Routing Policies}\label{sec:simulations-internet}
To approximate the routing system of the Internet, we use a methodology similar to many previous works~\cite{Let-the-market-BGP-sigcomm-2011,how-secure-goldberg-ComNet-2014,Jumpstarting-BGP-sigcomm-2016,RPKI-deployment-2016}. We first build the Internet topology graph from a large experimentally collected dataset~\cite{AS-relationships-dataset}, and infer routing policies over existing links based on the Gao-Rexford conditions~\cite{stable-internet-routing-TON-2001}.

\subsection{Building the Internet Topology}
We build the Internet topology graph from the AS-relationship dataset of CAIDA~\cite{AS-relationships-dataset}, which is collected based on the methodology of~\cite{AS-relationships-IMC-2013} and enriched with many extra peering (p2p) links~\cite{multilateral-peering-conext-2013}. The dataset contains a list of AS pairs with a peering link, which is annotated based on their relationship as \textit{c2p} (\textit{customer to provider}) or \textit{p2p} (\textit{peer to peer}).

\subsection{Selecting Routing Policies}
When an AS learns a new route for a prefix (or, announces a new prefix), it updates its routing table and, if required, sends BGP updates to its AS neighbors. The update and export processes are defined by its routing policies. Similarly to previous works~\cite{Let-the-market-BGP-sigcomm-2011,how-secure-goldberg-ComNet-2014,Jumpstarting-BGP-sigcomm-2016,RPKI-deployment-2016}, we select the routing policies based on the Gao-Rexford conditions that guarantee BGP convergence and stability~\cite{stable-internet-routing-TON-2001}:
\begin{description}
\item[C.1] Paths learned from customers are preferred to paths learned from peers or providers. Paths learned from peers are preferred to paths learned from providers.
\item[C.2] Between paths that are equivalent with respect to \textbf{C.1}, shorter paths (in number of AS-hops) are preferred.
\item[C.3] Between paths that are equivalent with respect to \textbf{C.1} and \textbf{C.2}, the path learned from the AS neighbor with the highest \textit{local preference} is preferred.
\item[C.4] Paths learned from customers, are advertised to all AS neighbors. Paths learned from peers or providers, are advertised only to customers.
\end{description}

In practice, the local preferences (see, \textbf{C.3}) are selected by an AS based on factors related to its intra-domain topology, business agreements, etc. Since it is not possible to know and emulate the real policies for every AS, we assign randomly the local preferences.


\end{document}